\documentclass[namedate,webpdf,imamci]{ima-authoring-template-arxiv}%
\usepackage[T1]{fontenc}
\DeclareFontShape{T1}{ptm}{m}{scit}{<->ssub * ptm/m/sc}{}

\graphicspath{{Fig/}}

\theoremstyle{thmstyletwo}%
\newtheorem{theorem}{Theorem}[section]
\newtheorem{proposition}{Proposition}[section]

\newtheorem{remark}[theorem]{Remark}
\newtheorem{corollary}[theorem]{Corollary}              
\newtheorem{assumption}[theorem]{Assumption}
\newtheorem{definition}[theorem]{Definition}

\numberwithin{equation}{section}

\renewcommand{\qed}{\hfill\ensuremath{\opensquare}}
\usepackage{abbreviations}
\usepackage{cite}
\usepackage{subfig}
\usepackage{pgfplots}
\pgfplotsset{compat=newest}

\begin{document}

\title[Koopman meets input-output data]{Koopman meets input-output data: Data-driven output-feedback control of nonlinear systems with closed-loop guarantees}

\author{Robin Strässer\ORCID{0000-0003-4629-844X}${}^{*,a}$, Julian Berberich\ORCID{0000-0001-6366-6238}${}^{a}$, Manuel Schaller\ORCID{0000-0002-8081-5108}${}^{b}$, Karl Worthmann\ORCID{0000-0002-1450-2373}${}^{c}$, and Frank Allgöwer\ORCID{0000-0002-3702-3658}${}^{a}$
    \vspace*{0.5\baselineskip}
    \address{\orgdiv{${}^{a}$University of Stuttgart}, \orgname{Institute for Systems Theory and Automatic Control}, \orgaddress{\street{Pfaffenwaldring 9}, \postcode{70550 Stuttgart}, \state{Baden-Württemberg}, \country{Germany}}}
    \address{\orgdiv{${}^{b}$Chemnitz University of Technology}, \orgname{Faculty of Mathematics},\\\orgaddress{\street{Reichenhainer Str. 41}, \postcode{09126 Chemnitz}, \state{Sachsen}, \country{Germany}}}
    \address{\orgdiv{${}^{c}$Technische Universität Ilmenau}, \orgname{Institute of Mathematics, Optimization-based Control Group}, \orgaddress{\street{Weimarer Str. 25}, \postcode{99693 Ilmenau}, \state{Thüringen}, \country{Germany}}\vspace*{0.5\baselineskip}}
}

\authormark{Strässer et al. (2026)}

\corresp[*]{Corresponding author: \href{email:robin.straesser@ist.uni-stuttgart.de}{robin.straesser@ist.uni-stuttgart.de}}

\received{Date}{0}{Year}
\revised{Date}{0}{Year}
\accepted{Date}{0}{Year}

\abstract{%
    Data-driven control of nonlinear systems from input-output measurements remains a fundamental challenge, as existing approaches with rigorous closed-loop guarantees predominantly require access to full state measurements. 
    In this paper, we address this gap by proposing a data-driven output-feedback controller design method for nonlinear systems that provides provable closed-loop guarantees while operating solely on measured input-output data. 
    Our approach combines Koopman operator theory with an extended state representation of the nonlinear system constructed from input-output trajectories. 
    This allows us to obtain a bilinear surrogate model directly from data, on which robust state-feedback design methods can be applied. 
    By exploiting the observability of the underlying nonlinear system, we establish exponential stability of the extended state, which in turn implies exponential convergence of the original system state to the origin. 
    Finally, we validate our theoretical findings in numerical simulations.%
}
\keywords{Koopman operator; Input-output data; Nonlinear data-driven control; Subspace identification.}

\maketitle

\section{Introduction}
Data-driven control of dynamical systems has emerged as a powerful paradigm for designing controllers directly from measured data, bypassing the need for first-principles model derivation.
This is particularly appealing in practice, where complex system dynamics are often difficult or expensive to model analytically, yet plenty of measurement data are readily available.
For linear systems, data-driven control is by now relatively well understood.
Willems' fundamental lemma~\citep{willems:rapisarda:markovsky:demoor:2005} provides a non-parametric characterization of all trajectories of a linear time-invariant system in terms of a single persistently excited experiment, forming the basis of a large body of work on data-driven predictive control and stabilization~\citep{markovsky:rapisarda:2008,coulson:lygeros:dorfler:2019,berberich:kohler:muller:allgower:2020c,faulwasser:ou:pan:schmitz:worthmann:2023}.
Extensions to output-feedback settings for linear systems have been pursued in several directions, including robust output-feedback controllers~\citep{berberich:scherer:allgower:2023}, verification of dissipativity properties from input-output data~\citep{koch:berberich:allgower:2022}, and data-driven output-feedback control of linear MIMO systems~\citep{alsalti:lopez:muller:2025}. 
However, extending these results to the nonlinear setting remains substantially more challenging and is largely open.

For nonlinear systems, data-driven control with closed-loop guarantees typically relies on state or state-derivative data~\citep{martin:schon:allgower:2023b}, and extensions to the output-feedback case are rare.
A notable exception is the work of~\citet{dai:depersis:monshizadeh:tesi:2023,dai:depersis:monshizadeh:tesi:2025}, who design dynamic output feedback controllers for discrete-time nonlinear systems directly from input-output data with local stability guarantees, requiring that an auxiliary input-output representation of the system can be expressed through a known dictionary of basis functions.
Constructing such representations from input-output data in a principled way remains an open challenge, and a natural starting point is the identification of nonlinear systems from data, which has been studied extensively~\citep{noel:kerschen:2017,ljung:andersson:tiels:schon:2020}.
Classical approaches include lifting to higher-dimensional feature spaces via basis function expansions, such as LPV~\citep{verdult:verhaegen:2002,bamieh:giarre:2002} or polynomial approximations~\citep{paduart:lauwers:swevers:smolders:schoukens:pintelon:2010}, local or piecewise linear subspace identification~\citep{verdult:2002}, and probabilistic approaches based on expectation-maximization~\citep{schon:wills:ninness:2011} or Gaussian processes~\citep{frigola:lindsten:schon:rasmussen:2013,frigola:chen:rasmussen:2014}.
A particularly relevant subclass is the class of bilinear systems, which arise naturally as finite-dimensional Koopman representations of nonlinear dynamics~\citep{iacob:toth:schoukens:2024} and for which subspace identification methods have been developed~\citep{favoreel:demoor:vanoverschee:1999,verdult:verhaegen:1999,chen:maciejowski:1999,chen:maciejowski:2000,favoreel:1999}, including kernel-based variants that avoid the exponential growth of data matrices with system order~\citep{verdult:verhaegen:2005,wingerden:verhaegen:2009}.
Finite-sample guarantees for bilinear system identification have also been studied recently~\citep{sattar:oymak:ozay:2022,sattar:jedra:fazel:dean:2025}, direct data-driven controller design has been proposed via LMIs~\citep{bisoffi:depersis:tesi:2020a} and model predictive control~\citep[MPC;][]{xie:berberich:strasser:allgower:2025}, and first end-to-end guarantees from identification to closed-loop control have been established in~\citet{chatzikiriakos:strasser:allgower:iannelli:2026}.

Complementary to system identification, observer design for nonlinear and bilinear systems is a classical and active research area; see~\citet{besancon:2007a,besancon:2007b,isidori:2017,bernard:2019,bernard:andrieu:astolfi:2022}.
Fundamental notions of nonlinear observability have been established from geometric~\citep{nijmeijer:1982,nijmeijer:vanderschaft:1990,hermann:krener:1977} and system-theoretic~\citep{sontag:1998,isidori:1985} perspectives, with further results for polynomial~\citep{sontag:1979} and discrete-time systems~\citep{albertini:dallessandro:1996}.
For nonlinear systems more broadly, early results on state reconstruction and convergence of the estimation error are due to~\citet{thau:1973}.
For bilinear systems specifically, observability conditions~\citep{williamson:1977} and stable state estimators have been developed~\citep{funahashi:1979,hara:furuta:1976,bornard:couenne:celle:1989,elliott:2009}, though observer convergence in this setting typically depends on the applied input sequence.

A promising framework for handling nonlinear dynamics in a data-driven fashion is Koopman operator theory~\citep{koopman:1931,mezic:2005,mauroy:mezic:susuki:2020,bevanda:sosnowski:hirche:2021,brunton:budisic:kaiser:kutz:2022}, which lifts nonlinear system dynamics into an infinite-dimensional, linear space through the action of the Koopman operator on observable functions.
This viewpoint has sparked a rich literature on data-driven approximation of the Koopman operator, including extended dynamic mode decomposition~\citep[EDMD;][]{williams:kevrekidis:rowley:2015,schmid:2010}, kernel-based variants~\citep{williams:rowley:kevrekidis:2016,klus:nuske:hamzi:2020,philipp:schaller:worthmann:peitz:nuske:2024}, and deep learning approaches~\citep{lusch:kutz:brunton:2018,yeung:kundu:hodas:2019,otto:rowley:2019}. 
Crucially, it has been established that linear Koopman approximations are generally insufficient for controlled systems, and that bilinear structures naturally emerge in Koopman representations of input-affine nonlinear systems~\citep{iacob:toth:schoukens:2024}; see also~\citet{haseli:cortes:2023a,haseli:cortes:2023b,haseli:cortes:2023c,haseli:cortes:2026,shang:haseli:cortes:zheng:2026} for further theoretical developments on the structure of controlled Koopman operators.
Further,~\citet{lian:wang:jones:2021,shang:cortes:zheng:2024,xiong:yuan:miao:wang:cortes:papachristodoulou:2025} propose extensions of Willems' fundamental lemma to the nonlinear setting using Koopman embeddings, though typically under restrictive invariance assumptions on the chosen dictionary.
Recently,~\citet{lazar:2025} addresses these limitations by constructing the Koopman operator on a product Hilbert space formed as the tensor product of state and input observable spaces, relaxing dictionary invariance and measure preservation requirements.
This allows the author to derive a nonlinear fundamental lemma for input-\emph{output} data by combining the resulting exact \emph{infinite}-dimensional bilinear representation with Hankel operators and a frame-based persistency of excitation condition.
However, the approach inherently relies on infinite-dimensional representations and requires state measurements during the offline data collection phase to construct the lifted Hankel operator.
While finite-dimensional EDMD approximations via a Khatri-Rao scheme are proposed, no finite-sample error bounds or closed-loop guarantees are established, limiting its applicability to rigorous data-driven controller design.

A key insight for practical applicability is that the Koopman operator can be approximated using delay-coordinate embeddings~\citep{mezic:banaszuk:2004,robinson:2005,susuki:mezic:2015,arbabi:mezic:2017b,mezic:2022,koltai:kunde:2024}, which are directly constructed from input-output time-series data without requiring state measurements.
These findings directly led to Hankel DMD~\citep{arbabi:mezic:2017b,kamb:kaiser:brunton:kutz:2020,pan:duraisamy:2020} and latent EDMD~\citep{ouala:chapron:collard:gaultier:fablet:2023}.
Koopman operator approximations based on delay-coordinate embeddings connect naturally to Takens' theorem~\citep{takens:1981} and have been exploited in MPC~\citep{korda:mezic:2018a} and various engineering applications such as flow prediction~\citep{yuan:zhou:zhou:wen:liu:2021}, soft robotics~\citep{bruder:fu:gillespie:remy:vasudevan:2021,haggerty:banks:kamenar:cao:curtis:mazic:hawkes:2023}, and grip force prediction~\citep{bazina:kamenar:fonoberova:mezic:2025}.
Finite-data error bounds for Koopman approximations are provided in~\citet{mezic:2022,nuske:peitz:philipp:schaller:worthmann:2023,schaller:worthmann:philipp:peitz:nuske:2023,zhang:zuazua:2023,yadav:mauroy:2025} for EDMD and in~\citet{philipp:schaller:worthmann:peitz:nuske:2024,kurdila:paruchuri:powell:guo:bobade:estes:wang:2024,kohne:philipp:schaller:schiela:worthmann:2025,bold:philipp:schaller:worthmann:2025,philipp2025error,strasser:schaller:berberich:worthmann:allgower:2025} for kernel-based EDMD variants.
Moreover,~\citet{strasser:schaller:worthmann:berberich:allgower:2025,strasser:schaller:worthmann:berberich:allgower:2026,strasser:berberich:allgower:2025,strasser:berberich:schaller:worthmann:allgower:2025} establish first closed-loop guarantees for Koopman-based controllers, including Koopman-based MPC~\citep{worthmann:strasser:schaller:berberich:allgower:2024,bold:grune:schaller:worthmann:2025,bold:schaller:schimperna:worthmann:2025,schimperna:bold:kohler:worthmann:2026,schimperna2025data}; see the recent overview paper~\citet{strasser:worthmann:mezic:berberich:schaller:allgower:2026}.
However, all of the above results with closed-loop guarantees require state measurements for the construction of the Koopman surrogate model and the subsequent controller design.
Obtaining full state measurements is often impractical or infeasible in real-world applications, where only input-output data are available, motivating the need for output-feedback approaches that avoid this restrictive assumption.

On the observer and estimation side, the Koopman framework has been used to design state estimators via various lifting strategies, including bilinear~\citep{surana:2016,surana:banaszuk:2016,surana:2020,otto:peitz:rowley:2024}, linear observable~\citep{lei:yin:2025,lyu:lang:wang:2025,yang:gao:chen:lv:wang:2025}, and dual Koopman forms~\citep{mohet:mauroy:winkin:2025}, as well as kernel-based and neural-network-enhanced Kalman filters~\citep{netto:mili:2018a,netto:mili:2018b,jiang:zhang:zuo:shi:su:2022,huang:zheng:fettweis:2024,guo:korotkine:forbes:barfoot:2021}; see~\citet{otto:rowley:2021,shi:haseli:mamakoukas:bruder:abraham:murphey:cortes:karydis:2026} for surveys.
Further approaches include robust observer synthesis using linear Koopman approximations with frequency-domain error characterization~\citep{dahdah:forbes:2024}, and the construction of LPV Koopman models from noisy output data using deep state-space encoders~\citep{iacob:beintema:schoukens:toth:2021,iacob:szecsi:mate:beintema:schoukens:toth:2025}.
However, these approaches either rely on the restrictive assumption of an exactly invariant Koopman dictionary or do not provide guarantees for the designed observer.
More broadly, output-feedback control using Koopman embeddings remains largely restricted to settings that assume exact Koopman invariance~\citep{kieboom:bartzioka:jafarian:2023}, linear Koopman approximations~\citep{deutscher:2024,lopez:heinrich:muller:2026}, or LPV surrogate models with frequency-domain error characterization~\citep{eyuboglu:strasser:allgower:karimi:2026}, and a rigorous data-driven output-feedback design for nonlinear systems with provable closed-loop guarantees is missing from the literature.

To solve this gap, we propose a data-driven output-feedback controller design method for nonlinear systems.
In particular, we combine Koopman operator theory with an extended state representation of nonlinear systems based on input-output data.
This allows us to build on robust state-feedback design schemes from the literature to exponentially stabilize the nonlinear extended-state system and, thereby, establish closed-loop guarantees for the underlying nonlinear system from input-output data.
By exploiting the observability of the system, we show exponential convergence to the origin of the corresponding system's state.
The proposed approach is the first Koopman-based controller design method that relies solely on input-output data and provides closed-loop guarantees for the underlying nonlinear system, paving the way forward to rigorous data-driven output-feedback control with closed-loop guarantees.
Finally, we validate our theoretical findings in numerical simulations.

The paper is structured as follows.
In Section~\ref{sec:background}, we introduce the problem setting and necessary background for the results developed in this paper.
Section~\ref{sec:IO-representations-extended-state} is devoted to an input-output representation of nonlinear systems.
Section~\ref{sec:DD-output-feedback-controller-design} contains our main contributions, namely a data-driven output-feedback controller design for nonlinear systems on the basis of a bilinear Koopman surrogate model.
Finally, the theoretical results are illustrated in Section~\ref{sec:numerical-examples} using numerical simulations, before concluding the paper in Section~\ref{sec:conclusion}.

\section{Problem setting and background}\label{sec:background}
First, we introduce the problem setting in Section~\ref{sec:problem-setting}.
Then, we provide the necessary background on Koopman operator theory and its usage for controlled nonlinear systems in Section~\ref{sec:Koopman-background}.

\subsection{Problem setting}\label{sec:problem-setting}
We consider a nonlinear system
\begin{subequations}\label{eq:dynamics-nonlinear}
    \begin{align}
        x_{k+1} &= f(x_k,u_k), \qquad x_0=\bar{x}\in\bbR^n,\\
        y_k &= h(x_k,u_k)
    \end{align}%
\end{subequations}%
with $x\in\bbX\subseteq\bbR^n$, $u\in\bbU\subseteq\bbR^m$, and $y\in\bbY\subseteq\bbR^p$.
The state transition map $f:\bbR^n\times\bbR^m \to \bbR^n$ and the output map $h:\bbR^n\times\bbR^m \to \bbR^p$ are unknown, but assumed to be sufficiently smooth.
\begin{assumption}[Smoothness]\label{ass:smoothness}
    The maps $f,h\in C^1$, i.e., the maps are continuously differentiable, and the sets $\bbX$, $\bbU$, $\bbY$ are compact, convex, and contain the origin in its interior.
    Further, $f(0,0)=0$ and $h(0,0)=0$.
\end{assumption}
For a given initial state $\bar{x}\in\bbX$ and an input sequence $\useq_{0:L-1}\in\bbU^L$, we define the corresponding state trajectory recursively via
\begin{align}
    x_0 = \bar{x},
    \qquad 
    x_{j+1} = f(x_j,u_j), \quad j = 0,...,L-1
\end{align}%
such that $x_j$ is the state at time $j$ starting from the initial condition $\bar{x}$ at time zero. 
The corresponding output is $y_j = h(x_j,u_{j})$.
Further, we write 
\begin{equation}\label{eq:dynamics-nonlinear-j-th-step}
    f^{j}(\bar{x},\useq_{0:j-1})
    = f(x_{j-1},u_{j-1}) = x_j,
\end{equation}%
which corresponds to applying the map $f$ $j$ times.

To find a suitable system representation, the unknown system dynamics are estimated from data, where, however, the state $x$ is inaccessible and only input-output measurements are available.
In particular, we collect $d\in \bbN$ input-output trajectories of length $L+1$, i.e.,
\begin{equation}\label{eq:data-collection-IO}
    \cD = \left\{\{u_t^{(k)},y_t^{(k)}\}_{t=-L}^{0}\right\}_{k=1}^{d}
\end{equation}%
Note that we use only input-\emph{output} data instead of input-\emph{state} data.
For the Koopman surrogate established later, each trajectory of length $L+1$ will be associated with a data triplet of a delay embedding, its successor, and its input.
This yields a total of $d$ data triplets used for the data-driven surrogate characterization.
\begin{remark}
    As usual in data-driven control, we may also collect one single trajectory instead of multiple shorter input-output trajectories.
    More precisely, instead of the data $\cD$ we could also collect a trajectory of length $L+d+1$, i.e., $\{u_t,y_t\}_{t=-L}^{d}$.
    Based on this trajectory, we could again define $d$ extended state triplets, leading to the proposed Koopman-based surrogate.
    Then, all results established in this paper remain valid.
\end{remark}
To characterize when the full system behavior can be inferred from input-output data alone, we define the following.
\begin{definition}[Observability map of order $L$]\label{def:observability-map}
    For any initial condition $\bar{x}\in\bbX$ and input sequence $\useq_{0:L-1}\in\bbU^L$ for integer $L\geq 1$, we define the observability map of order $L$ as the function that stacks $L$ consecutive output values, i.e., 
    \begin{equation}
        \cO_L(\bar{x},\useq_{0:L-1}) 
        = \begin{bmatrix}
            y_{0}\\
            y_{1}\\
            y_{2}\\
            \vdots \\
            y_{L-1}\\
        \end{bmatrix}
        = \begin{bmatrix}
            h(x_{0},u_{0})\\
            h(x_{1},u_{1}) \\
            h(x_{2},u_{2}) \\
            \vdots \\
            h(x_{L-1},u_{L-1})
        \end{bmatrix}
        = \begin{bmatrix}
            h(\bar{x},u_{0}) \\
            h(f(\bar{x},u_{0}),u_{1}) \\
            h(f^{2}(\bar{x},\useq_{0:1}),u_{2}) \\
            \vdots \\
            h(f^{L-1}(\bar{x},\useq_{0:L-2}),u_{L-1})
        \end{bmatrix}.
    \end{equation}%
\end{definition}
The observability map used here coincides with the one given for uncontrolled systems in~\citet{iacob:szecsi:mate:beintema:schoukens:toth:2025}, where a Koopman model is identified from noisy input-output data and the estimation error is shown to vanish asymptotically.
However, explicit error bounds are not provided, and the convergence result assumes exponential stability of the Koopman model, which is generally difficult to satisfy even when the original system is exponentially stable~\citep[see][Appendix C]{philipp:schaller:worthmann:peitz:nuske:2025}.

Based on $\cO_L$, we can now introduce the notion of uniform observability~\citep[cf.][]{gauthier:kupka:1994,gauthier:kupka:2001,gauthier:hammouri:othman:2002,hanba:2009} relevant for the later established results.
\begin{definition}[Uniform observability]\label{def:uniform-observability}
    The system~\eqref{eq:dynamics-nonlinear} is \emph{uniformly observable} on $\bbX$ if there exists an integer $L\geq 1$ such that, for every admissible input sequence $\useq_{0:L-1}\in\bbU^L$, the map $\cO_L(\cdot,\useq_{0:L-1}): \bbX \to \bbR^{Lp}$ is injective. 
\end{definition}
The injectivity corresponding to uniform observability ensures that the initial state $\bar{x}$ can be uniquely recovered based on measured input-output sequences of length $L$, i.e., $\cO_L(x_1,\useq)=\cO_L(x_2,\useq)$ implies $x_1=x_2$.
This is a common assumption in data-driven control via input-output data~\citep[cf.][]{dai:depersis:monshizadeh:tesi:2023,dai:depersis:monshizadeh:tesi:2025}.
In the remainder of the paper, we enforce injectivity and, thus, uniform observability by a slightly stronger condition that ensures well-posedness of the inverse map $\cO_L^{-1}$.
This allows us to derive a suitable input-output characterization of the nonlinear system~\eqref{eq:dynamics-nonlinear}.
To this end, we define an extended state based on past input-output data and employ a Koopman lifting, yielding a data-driven bilinear surrogate of the underlying nonlinear system.
This representation is then used for an output-feedback controller design with closed-loop guarantees for the unknown nonlinear system.

\subsection{Koopman operator background}\label{sec:Koopman-background}
Consider a discrete-time autonomous dynamical system $
    x_{k+1} = f(x_k)
$ with $x_k \in \bbX \subseteq \bbR^n$ and state-transition map $f : \bbX \to \bbX$.
Instead of evolving the state directly, the \emph{Koopman operator} $\cK$~\citep{koopman:1931} acts on scalar-valued observable functions $\varphi : \bbX \to \bbR$ by composing them with $f$, i.e.,
\begin{equation}
    (\cK \varphi)(x_k) = \varphi\big(f(x_k)\big) 
    = \varphi(x_{k+1}).
\end{equation}%
Although the underlying dynamics $f$ may be nonlinear, $\mathcal{K}$ is a \emph{linear} operator acting on an, in general, infinite-dimensional function space.
In practice, one works with a finite-dimensional approximation by selecting a dictionary of $N$ observables, i.e.,
\begin{equation}
    \Psi(x) = \begin{bmatrix} \psi_1(x) & \cdots & \psi_N(x) \end{bmatrix}^\top \in \mathbb{R}^N,
\end{equation}%
and seeking a matrix $K \in \bbR^{N \times N}$ such that
\begin{equation}
    \Psi(x_{k+1}) \approx K \Psi(x_k).
\end{equation}
A standard data-driven method for identifying $K$ is \emph{extended dynamic mode decomposition}~\citep[EDMD][]{williams:kevrekidis:rowley:2015}.    
Given a dataset of $d+1$ consecutive state pairs $\{(x_j,\, x_{j+1})\}_{j=0}^{d}$ collected along one or more system trajectories, EDMD computes the least-squares solution
\begin{equation}
    K 
    = \argmin_{K\in \bbR^{N\times N}} \sum_{j=0}^{d-1} \|\Psi(x_{j+1}) - K \Psi(x_j)\|_2^2
    = \argmin_{K\in \bbR^{N\times N}} \|\Psi(X^+) - K \Psi(X)\|_\mathrm{F}
    = \Psi(X^+) (\Psi(X))^\dagger,
\end{equation}%
where the snapshot matrices
\begin{equation}
    \Psi(X) = \begin{bmatrix}
        \Psi(x_0) & \cdots & \Psi(x_{d-1})
    \end{bmatrix}\in\bbR^{N\times d},
    \quad
    \Psi(X^+) = \begin{bmatrix}
        \Psi(x_1) & \cdots & \Psi(x_d)
    \end{bmatrix}\in\bbR^{N\times d},
\end{equation}%
collect the lifted current and successor states, respectively, and $(\cdot)^\dagger$ denotes the Moore--Penrose pseudoinverse.
    
For a controlled discrete-time system
\begin{equation}
    x_{k+1} = f(x_k, u_k), \quad x_k \in \bbX,\; u_k \in \bbU,
\end{equation}%
the Koopman framework is extended by treating the input as a parameter of the
operator. 
Assuming $u_k$ is held constant over each sampling interval, consistent with a sample-and-hold implementation, a family of input-dependent Koopman operators $\{\cK^u\}_{u \in \bbU}$ is defined by
\begin{equation}
    (\cK^u \varphi)(x) = \varphi\big(f(x, u)\big);
\end{equation}%
compare~\citet{haseli:cortes:2026}.
Each operator $\cK^u$ remains linear in the observable space for fixed~$u$.
Introducing the same dictionary of observable functions $\Psi$ as before, one seeks a finite-dimensional approximation from data.

In linear EDMD with control~\citep{proctor:brunton:kutz:2016}, one builds a lifted \emph{linear} model
\begin{equation}\label{eq:EDMDc-linear-dynamics}
    \Psi(x_{k+1}) \approx A \Psi(x_k) + B u_k,
\end{equation}%
where the matrices $A \in \bbR^{N \times N}$ and $B \in \bbR^{N \times m}$ can be identified simultaneously via a least-squares regression over snapshot data.
Although this linear evolution of the lifted state would be desirable, as it enables the direct application of linear systems theory and control design to the originally nonlinear controlled system, it introduces fundamental approximation errors. 
In particular, representations of the above form impose severe limitations on the original system~\citep{shang:haseli:cortes:zheng:2026,heeg:worthmann:2026}.
This fact prevents closed-loop guarantees and thereby safe control of the underlying nonlinear system~\citep{iacob:toth:schoukens:2024,strasser:worthmann:mezic:berberich:schaller:allgower:2026}.
Instead, at least \emph{bilinear} Koopman surrogates are required to approximate the infinite-dimensional Koopman action.
In particular, the Koopman-based control framework proposed in~\citet{strasser:2026} leverages the bilinear Koopman surrogate modeling approaches SafEDMD~\citep{strasser:schaller:worthmann:berberich:allgower:2025,strasser:schaller:worthmann:berberich:allgower:2026} and kEDMD~\citep{strasser:schaller:berberich:worthmann:allgower:2025} to ensure closed-loop properties of the underlying nonlinear system~\citep{strasser:berberich:schaller:worthmann:allgower:2025}.
More precisely, the resulting bilinear Koopman surrogate reads
\begin{equation}\label{eq:bilinear-surrogate-literature-state}
    \Psi(x_{k+1}) = A \Psi(x_k) + B_0 u_k + \tilde{B} (u_k\otimes \Psi(x_k)) + r(x_k,u_k), 
\end{equation}%
where the residual can be \emph{proportionally} bounded by
\begin{equation}
    \|r(x_k,u_k)\| \leq c_x \|\Psi(x_k)\| + c_u \|u_k\|, \qquad c_x,c_u\geq 0,
\end{equation}%
and the matrices $A\in\bbR^{N\times N}$, $B_0\in\bbR^{N\times m}$, and $\tB\in\bbR^{N\times Nm}$ are either computed via kernel methods or least-squares regression over snapshot data, i.e.,
\begin{equation}
    \begin{bmatrix}
        A & B_0 & \tB
    \end{bmatrix} 
    = \Psi(X^+) \begin{bmatrix} \Psi(X) \\ U \\ U_X \end{bmatrix}^\dagger,
\end{equation}%
where $U = \begin{bmatrix} u_0 & \cdots & u_{d-1} \end{bmatrix}$ and $U_X = \begin{bmatrix} u_0\otimes \Psi(x_0) & \cdots & u_{d-1}\otimes \Psi(x_{d-1}) \end{bmatrix}$.

The main limitation of the bilinear surrogate model~\eqref{eq:bilinear-surrogate-literature-state} is its dependence on the state $x$, which is typically not accessible in practice and therefore restrictive. 
This manifests in two ways: first, learning the surrogate model requires state measurements; second, the resulting state-space representation relies on the state for both prediction and controller design.
In the remainder of this paper, we circumvent these issues by generalizing the state-dependent Koopman surrogate models introduced above, deriving an input-output characterization that is independent of the unknown state $x$.

\section{Input-output representation of nonlinear systems}\label{sec:IO-representations-extended-state}
In this section, we discuss input-output representations of nonlinear systems, extending the ideas of linear systems, which we recall in Appendix~\ref{app:IO-representation-linear} for completeness.
We note that the results developed in this section may be of independent interest beyond Koopman-based control; see Remark~\ref{rem:IO-interest-beyond-Koopman}.

Consider the nonlinear system~\eqref{eq:dynamics-nonlinear} with $x\in\bbX\subseteq\bbR^n$, $u\in\bbU\subseteq\bbR^m$, and $y\in\bbY\subseteq\bbR^p$, where the state $x$ is inaccessible but only input-output measurements are available.
Since the original internal state $x$ is unknown, the core idea is to use past input-output measurements to infer knowledge about the initial condition.
In particular, we seek an equivalent input-output representation of the nonlinear system~\eqref{eq:dynamics-nonlinear} based on an extended state of delayed input-output measurements.
To this end, we construct a delay embedding $\xi$ of past input-output measurements over a finite horizon $L$, which we call the extended-state vector, as common in the data-driven control literature.
More precisely, we define
\begin{equation}\label{eq:extended-state}
    \xi_k = \begin{bmatrix}
        u_{k-L}^\top & u_{k-L+1}^\top & \cdots & u_{k-1}^\top &
        y_{k-L}^\top & y_{k-L+1}^\top & \cdots & y_{k-1}^\top
    \end{bmatrix}^\top.
\end{equation}%

In the following, we investigate under which conditions there exists a smooth map $\cR_L$ such that the initial state $\bar{x}$ is uniquely reconstructable from a finite input-output sequence. 
In particular, we aim at the representation~\citep{isidori:1985}
\begin{equation}\label{eq:reconstruction-x-from-u-y}
    x_{k-L} = \cR_L(\useq_{k-L:k-1},\yseq_{k-L:k-1});
\end{equation}%
compare the reconstructability map in~\citet{iacob:szecsi:mate:beintema:schoukens:toth:2025}.
If such $\cR_L$ exists, we can substitute it into the combined dynamics and output equation to obtain an equivalent closed-form input-output relation that does not rely on the original state $x_{k-L}$.
This is exactly characterized by uniform observability of the underlying nonlinear system~\eqref{eq:dynamics-nonlinear}, compare Definition~\ref{def:uniform-observability}.
In the following, we characterize under which conditions this observability property holds and, thereby, an equivalent input-output representation exists.

Based on Assumption~\ref{ass:smoothness}, we directly deduce $\cO_L\in C^1$, i.e., we can compute the Jacobian of $\cO_L$ w.r.t. $\bar{x}$.
In particular,
\begin{equation}
    \frac{\partial \cO_L}{\partial \bar{x}}
    = \begin{bmatrix}
        H_0
        \\
        H_1
        F_0
        \\
        H_2
        F_1
        F_0
        \\
        \vdots 
        \\
        H_{L-1}
        F_{L-2}
        \cdots  
        F_1 
        F_0
    \end{bmatrix},
    \qquad
    \begin{aligned}
        H_j 
        &= \frac{\partial h}{\partial x}(x_j,u_{j}) 
        = \frac{\partial h}{\partial x}(f^j(\bar{x},\useq_{0:j-1}),u_{j}),
        \\
        F_j 
        &= \frac{\partial f}{\partial x}(x_j,u_{j}) 
        = \frac{\partial f}{\partial x}(f^j(\bar{x},\useq_{0:j-1}),u_{j}),
    \end{aligned}%
\end{equation}%
where $H_j$ and $F_j$ depend on the trajectory $x_j$, i.e., on both the initial condition $\bar{x}$ and the input sequence $\useq_{0:j}$.
\begin{assumption}[Uniform injectivity of the observability map]\label{ass:lower-bound-Jacobian-observability-map}
    There exists an integer $L\geq 1$ and a threshold $\alpha>0$ such that
    \begin{equation}\label{eq:lower-bound-Jacobian-observability-map}
        \|\cO_L(x_1,\useq_{0:L-1}) - \cO_L(x_2,\useq_{0:L-1})\|
        \geq 
        \alpha \|x_1 - x_2\|
    \end{equation}%
    holds for all $x_1,x_2\in\bbX$ and all $\useq_{0:L-1}\in\bbU^L$.
\end{assumption}
Assumption~\ref{ass:lower-bound-Jacobian-observability-map} can be understood as a quantitative injectivity ensuring observability of the underlying system.
A \emph{pointwise} condition that is typically easier to verify in practice compared to Assumption~\ref{ass:lower-bound-Jacobian-observability-map} is
\begin{equation}
    \sigma_{\min}\!\left(\frac{\partial \cO_L}{\partial \bar{x}}(\bar{x},\useq_{0:L-1})\right) \geq \alpha
\end{equation}%
for all $\bar{x}\in\bbX$ and all $\useq_{0:L-1}\in\bbU^L$, where $\sigma_{\min}$ denotes the smallest singular value.
However, this pointwise condition does not imply~\eqref{eq:lower-bound-Jacobian-observability-map} in general, and the stronger Assumption~\ref{ass:lower-bound-Jacobian-observability-map} is required for the following result.
There, we explain a given input-output trajectory of~\eqref{eq:dynamics-nonlinear} by an extended-state system if there is a bijection between the given trajectory and the extended-state trajectory.
\begin{proposition}[Input-output representation of nonlinear systems]\label{prop:IO-representation-nonlinear}
    Suppose Assumption~\ref{ass:smoothness} and Assumption~\ref{ass:lower-bound-Jacobian-observability-map} hold.
    Then, any trajectory $\{u_k,y_k\}_{k=-L}^{d}$ of~\eqref{eq:dynamics-nonlinear} can be explained by the extended-state system 
    \begin{subequations}\label{eq:IO-representation-nonlinear}
        \begin{align}
            \label{eq:IO-representation-nonlinear-state}
            \xi_{k+1} &= f_\xi(\xi_k,u_k), \\
            y_k &= h_\xi(\xi_k,u_k)
            \label{eq:IO-representation-nonlinear-output}
        \end{align}%
    \end{subequations}%
    with an extended state~\eqref{eq:extended-state} initialized as
    \begin{equation}\label{eq:extended-state-initialization}
        \xi_0 = \begin{bmatrix}
            u_{-L}^\top & u_{-L+1}^\top & \cdots & u_{-1}^\top & 
            y_{-L}^\top & y_{-L+1}^\top & \cdots & y_{-1}^\top
        \end{bmatrix}^\top\in\bbR^{L(p+m)},
    \end{equation}%
    where
    \begin{subequations}
        \begin{align}
            f_\xi(\xi_k,u_k) 
            &= \left[\begin{array}{cccc|cccc}
                0 & I & \cdots & 0 & 0 & \cdots & 0 & 0 \\
                \vdots & \vdots & \ddots & \vdots & \vdots & \ddots & \vdots & \vdots \\
                0 & 0 & \cdots & I & 0 & \cdots & 0 & 0 \\
                0 & 0 & \cdots & 0 & 0 & \cdots & 0 & 0 \\
                \hline
                0 & 0 & \cdots & 0 & 0 & I & \cdots & 0 \\
                \vdots & \vdots & \ddots & \vdots & \vdots & \ddots & \vdots & \vdots \\
                0 & 0 & \cdots & 0 & 0 & 0 & \cdots & I \\
                0 & 0 & \cdots & 0 & 0 & 0 & \cdots & 0
            \end{array}\right]
            \xi_k
            + \left[\begin{array}{c}
                0 \\
                \vdots \\
                0 \\
                I_m \\\hline
                0 \\
                \vdots \\
                0 \\
                0
            \end{array}\right]
            u_k
            + \left[\begin{array}{c}
                0 \\
                \vdots \\
                0 \\
                0 \\\hline
                0 \\
                \vdots \\
                0 \\
                h(f^L(\cO_L^{-1}(\xi_{k}),\xi_k),u_{k})
            \end{array}\right],
            \\
            h_\xi(\xi_k,u_k)
            &= \left[\begin{array}{cccc|cccc}
                0 & 0 & \cdots & 0 & 0 & \cdots & 0 & I_p 
            \end{array}\right]
            f_\xi(\xi_k,u_k)
        \end{align}%
    \end{subequations}%
    using 
    \begin{subequations}
        \begin{align}
            f^L(\cO_L^{-1}(\xi_{k}),\xi_k)
            &= f^L(\cO_L^{-1}(\xi_{k}),\begin{bmatrix}I_{mL} & 0_{mL\times pL}\end{bmatrix}\xi_k)
            = f^L(\cO_L^{-1}(\xi_{k}),\useq_{k-L:k-1}),
            \\
            \cO_L^{-1}(\xi_{k})
            &= \cO_L^{-1}(\begin{bmatrix}I_{mL} & 0_{mL\times pL}\end{bmatrix}\xi_k,\begin{bmatrix}0_{pL\times mL} & I_{pL}\end{bmatrix}\xi_k)
            = \cO_L^{-1}(\useq_{k-L:k-1},\yseq_{k-L:k-1}).
        \end{align}%
    \end{subequations}%
\end{proposition}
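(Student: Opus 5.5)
The proof rests on one fact: Assumption~\ref{ass:lower-bound-Jacobian-observability-map} makes $\cO_L(\cdot,\useq)$ invertible on its image, uniformly in the input sequence. After that, the representation is a direct substitution.

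First, we fix an input sequence $\useq_{0:L-1}\in\bbU^L$. If $\cO_L(x_1,\useq)=\cO_L(x_2,\useq)$, then \eqref{eq:lower-bound-Jacobian-observability-map} gives $\alpha\|x_1-x_2\|\le 0$, so $x_1=x_2$. Hence $\cO_L(\cdot,\useq)$ is injective on $\bbX$. We then define $\cO_L^{-1}(\useq,\cdot)$ on the image set $\{(\useq,\cO_L(\bar x,\useq)):\bar x\in\bbX,\ \useq\in\bbU^L\}$. The same inequality shows this inverse is Lipschitz in the output argument with constant $1/\alpha$, so it is well posed. This is the reason the quantitative assumption is needed rather than a pointwise Jacobian condition. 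We only need the inverse on the set of actually realized input-output windows, so no extension to all of $\bbR^{L(p+m)}$ is required.

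Second, we take a trajectory $\{u_k,y_k\}_{k=-L}^{d}$ of \eqref{eq:dynamics-nonlinear}, generated by some state sequence $x_k\in\bbX$. We build $\xi_k$ as in \eqref{eq:extended-state}, with $\xi_0$ as in \eqref{eq:extended-state-initialization}. For each $k\in\{0,\dots,d\}$, the window $(\useq_{k-L:k-1},\yseq_{k-L:k-1})$ equals $\cO_L(x_{k-L},\useq_{k-L:k-1})$ by Definition~\ref{def:observability-map}, using time invariance (shift the time origin to $k-L$). Therefore
\[
x_{k-L}=\cO_L^{-1}(\useq_{k-L:k-1},\yseq_{k-L:k-1}),
\]
which is \eqref{eq:reconstruction-x-from-u-y}. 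Applying $f$ $L$ times gives $x_k=f^L(\cO_L^{-1}(\xi_k),\useq_{k-L:k-1})$, and so $y_k=h\big(f^L(\cO_L^{-1}(\xi_k),\xi_k),u_k\big)$.

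Third, we verify the two equations of \eqref{eq:IO-representation-nonlinear} block by block.
\begin{itemize}
\item The shift matrix moves $u_{k-L+1},\dots,u_{k-1}$ up one slot, and the $B$-type term inserts $u_k$ in the last input slot. The output blocks shift the same way, and the last output slot receives the nonlinear term, which we just showed equals $y_k$. Hence $f_\xi(\xi_k,u_k)=\xi_{k+1}$.
\item The selector $[0\ \cdots\ 0\ I_p]$ picks the last output block of $\xi_{k+1}$, which is $y_k$. This gives \eqref{eq:IO-representation-nonlinear-output}.
\end{itemize}
Induction on $k$ from $\xi_0$ then shows that the extended-state recursion reproduces exactly the given $u_k,y_k$.

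For the bijection, the map from trajectory to $\{\xi_k\}$ is injective because $\xi_k$ consists of the data themselves. Conversely, each $u_k$ and $y_k$ can be read off from $\xi_{k+1}$, with the initial ones read off from $\xi_0$.

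The main obstacle is the first step: making the inverse well defined. One has to be careful that $\cO_L^{-1}$ is only evaluated on windows generated by states in $\bbX$. This requires that the trajectory stays in $\bbX$ (assumed for trajectories of \eqref{eq:dynamics-nonlinear}) and that the inverse is uniform over all input sequences. The regularity of $f_\xi$ (Lipschitz continuity of $\cO_L^{-1}$ from Assumption~\ref{ass:lower-bound-Jacobian-observability-map}, composed with the $C^1$ maps $f,h$) is a by-product. It is not needed for the representation itself.
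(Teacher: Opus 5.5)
Your proof is correct and takes essentially the same route as the paper. Both arguments get injectivity of $\cO_L(\cdot,\useq)$ from Assumption~\ref{ass:lower-bound-Jacobian-observability-map}, reconstruct $x_{k-L}=\cO_L^{-1}(\useq_{k-L:k-1},\yseq_{k-L:k-1})$, substitute into $f^L$ and $h$ for the last output block, and use the shift structure for the remaining blocks. Your choice to define $\cO_L^{-1}$ only on the realized image, without the paper's claim of a $C^1$ map on all of $\bbR^{Lp}$, is if anything the more careful formulation, and your explicit induction on $k$ fills in a step the paper leaves implicit.
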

\begin{proof}
    The structure in $f_\xi$ due to the shifts of inputs and outputs results using similar arguments as in the linear case; see Appendix~\ref{app:IO-representation-linear}.
    It remains to show the last row of $f_\xi$, i.e., the dynamics for $y_k$.

    First, we establish the reconstruction of the initial condition $x_{k-L}$.
    Assumption~\ref{ass:lower-bound-Jacobian-observability-map} directly implies that $\cO_L(\cdot,\useq_{0:L-1})$ is injective on $\bbX$ for every $\useq_{0:L-1}\in\bbU^L$.
    In particular, if $\cO_L(x_1,\useq_{0:L-1})=\cO_L(x_2,\useq_{0:L-1})$, 
    then~\eqref{eq:lower-bound-Jacobian-observability-map} gives $\alpha\|x_1-x_2\|\leq 0$
    and, since $\alpha>0$, we conclude $x_1=x_2$.    
    Hence, $\cO_L(\cdot,\useq_{0:L-1})$ is uniformly injective on $\bbX$, which corresponds to uniform observability of the underlying system according to Definition~\ref{def:uniform-observability}.
    Thus, the initial state can be uniquely reconstructed.
    More precisely, for each $\useq_{0:L-1}\in\bbU^L$ there exists a map $\cR_L(\useq_{0:L-1},\cdot): \bbR^{Lp}\to\bbX$ with $\cR_L = \cO_L^{-1}\in C^1$ such that
    \begin{equation}
        \cO_L(\bar{x},\useq_{0:L-1}) 
        = \zeta
        \qquad\iff\qquad
        \bar{x} = \cR_L(\useq_{0:L-1},\zeta).
    \end{equation}%
    Applied along a trajectory at time $k$ with $\bar{x}=x_{k-L}$, $\zeta = \yseq_{k-L:k-1}$ and $\useq_{0:L-1}=\useq_{k-L:k-1}$, we obtain the state reconstruction formula~\eqref{eq:reconstruction-x-from-u-y}.
    Substituting this into the dynamics~\eqref{eq:dynamics-nonlinear} with~\eqref{eq:dynamics-nonlinear-j-th-step} yields
    \begin{equation}
        x_{k} 
        = f^L(x_{k-L},\useq_{k-L:k-1}) 
        = f^L(\cR_L(\useq_{k-L:k-1},\yseq_{k-L:k-1}),\useq_{k-L:k-1})
    \end{equation}%
    and, thus, 
    \begin{equation}
        y_{k} = h(f^L(\cR_L(\useq_{k-L:k-1},\yseq_{k-L:k-1}),\useq_{k-L:k-1}),u_{k}).
    \end{equation}%
    Finally, we replace $\useq_{k-L:k-1}$ and $\yseq_{k-L:k-1}$ by the respective entries in $\xi_k$ and extract the last row for the definition of $h_\xi$.
    This concludes the proof.
\end{proof}
Proposition~\ref{prop:IO-representation-nonlinear} establishes a (non-minimal) input-output representation based on the extended state~\eqref{eq:extended-state}, that has the same input-output behavior as the unknown nonlinear system~\eqref{eq:dynamics-nonlinear}.
Although this representation is a natural nonlinear extension of linear subspace identification, the explicit structural input-output representation in~\eqref{eq:IO-representation-nonlinear} is hardly exploited in the literature.
More precisely, most of the available results for nonlinear subspace identification rely on linear-like approximations of the underlying nonlinear system~\citep[compare, e.g.,][]{bamieh:giarre:2002,verdult:2002,williams:kevrekidis:rowley:2015}.
Instead, we follow a different approach and use the inverse map $\cO_L^{-1}$ to characterize the initial condition, exploiting uniform observability of the underlying system.

More precisely, Assumption~\ref{ass:lower-bound-Jacobian-observability-map} enforces that the inverse map $\cO_L^{-1}(\useq_{0:L-1},\cdot)$ is well-defined on $\cO_L(\bbX,\useq_{0:L-1})$ for every $\useq_{0:L-1}\in\bbU^L$, since the uniform expansion condition~\eqref{eq:lower-bound-Jacobian-observability-map} directly implies injectivity of $\cO_L(\cdot,\useq_{0:L-1})$ on $\bbX$, uniformly in $\useq_{0:L-1}$.
In addition, Assumption~\ref{ass:smoothness} ensures boundedness of $\frac{\partial \cO_L}{\partial \bar{x}}$ on $\bbX\times\bbU^L$, i.e., $\|\frac{\partial \cO_L}{\partial \bar{x}}(\bar{x},\useq_{0:L-1})\| \leq L_{\cO} < \infty$ for all $\bar{x}\in\bbX$ and $\useq_{0:L-1}\in\bbU^L$, which follows from continuity of $\frac{\partial\cO_L}{\partial\bar{x}}$ and compactness of $\bbX\times\bbU^L$.
Combining this upper bound with the lower bound in Assumption~\ref{ass:lower-bound-Jacobian-observability-map} yields the bi-Lipschitz condition
\begin{equation}\label{eq:bi-Lipschitz-cO}
    \alpha \| x_1 - x_2 \|
    \leq
    \| \cO_L(x_1,\useq_{0:L-1}) - \cO_L(x_2,\useq_{0:L-1}) \|
    \leq
    L_{\cO} \| x_1 - x_2\|
\end{equation}%
uniformly in $\useq_{0:L-1}\in\bbU^L$, where the upper inequality follows from the mean value inequality for vector-valued $C^1$-maps on the convex set $\bbX$~\citep{rudin:1976}.
Inverting the lower bound in~\eqref{eq:bi-Lipschitz-cO} directly yields
\begin{equation}\label{eq:Lipschitz-inverse}
    \| \cO_L^{-1}(\useq_{0:L-1},y_1) - \cO_L^{-1}(\useq_{0:L-1},y_2) \|
    \leq 
    \frac{1}{\alpha} \| y_1 - y_2 \|,
\end{equation}%
i.e., the inverse map is Lipschitz continuous uniformly in $\useq_{0:L-1}\in\bbU^L$, which prevents noise amplification in the presence of small measurement errors.
Further, since $\sigma_{\min}(\frac{\partial\cO_L}{\partial\bar{x}})\geq\alpha>0$ uniformly on $\bbX\times\bbU^L$, which is implied by~\eqref{eq:lower-bound-Jacobian-observability-map} via differentiation, the inverse function theorem~\citep{rudin:1976} guarantees that $\cO_L^{-1}\in C^1$, i.e., the inverse map is continuously differentiable.
Hence, the initial state~$\bar{x}$ can be uniquely and smoothly reconstructed from input-output sequences $\useq_{0:L-1}$, $\yseq_{0:L-1}$, and the extended system representation in~\eqref{eq:IO-representation-nonlinear} is well-defined.
\begin{remark}
    Takens' theorem~\citep{takens:1981,sauer:yorke:casdagli:1991} analyzes the delay-coordinate map~$\cO_L$ defined in Definition~\ref{def:observability-map} in the autonomous case (i.e., without control input $u$). 
    It states that, for generic systems and output maps, the map $\cO_L$ generically becomes an embedding for $L\geq 2n$.
    In contrast, uniform observability according to Definition~\ref{def:uniform-observability} requires injectivity for a fixed system and uniformly over all input sequences $\useq_{0:L-1}\in\bbU^L$, while the lower bound in Assumption~\ref{ass:lower-bound-Jacobian-observability-map} further strengthens this to a quantitative (well-conditioned) embedding, which is not guaranteed by Takens.
\end{remark}
\begin{remark}\label{rem:IO-interest-beyond-Koopman}
    In the remainder of the paper, we use the input-output representation established by Proposition~\ref{prop:IO-representation-nonlinear} to derive a Koopman-based output-feedback controller design method. 
    We note, however, that Proposition~\ref{prop:IO-representation-nonlinear} is not specific to Koopman-based methods and could serve as a foundation for generalizing other nonlinear data-driven state-feedback designs to the output-feedback setting; see~\citet{martin:schon:allgower:2023b} for an overview of such designs with closed-loop guarantees. 
    For example, one could define a set of basis functions and apply a nonlinear data-driven control approach as in~\citet{lazar:2024}, or construct a polynomial approximation of the nonlinear input-output dynamics in the extended state $\xi$ and combine it with robust control and sum-of-squares (SOS) optimization as in~\citet{martin:allgower:2024,martin:2024}. 
    In the present work, Koopman is a particularly natural choice given that the nonlinear dynamics~\eqref{eq:dynamics-nonlinear} and the associated maps $f_\xi$, $h_\xi$ are unknown.
\end{remark}

\section{Data-driven output-feedback controller design for nonlinear systems}\label{sec:DD-output-feedback-controller-design}
After establishing an input-output representation of unknown nonlinear systems, we leverage Koopman operator theory to design an output-feedback controller for nonlinear systems with rigorous closed-loop guarantees.
To this end, Section~\ref{sec:IO-Koopman-surrogate} is devoted to deriving a bilinear surrogate model for the nonlinear input-output representation of the underlying system.
Then, we use this surrogate to design an output-feedback controller in Section~\ref{sec:IO-Koopman-controller-design}.

\subsection{Koopman-based bilinear surrogate of nonlinear input-output behavior}\label{sec:IO-Koopman-surrogate}
Based on the nonlinear input-output representation~\eqref{eq:IO-representation-nonlinear} of the underlying nonlinear system~\eqref{eq:dynamics-nonlinear} established by Proposition~\ref{prop:IO-representation-nonlinear}, we employ a Koopman lifting to bilinearize the representation, which is subsequently learned via data.
To this end, we follow the line of thoughts presented in~\citet{strasser:schaller:worthmann:berberich:allgower:2026,strasser:worthmann:mezic:berberich:schaller:allgower:2026} and introduce the vector-valued observable function $\Psi: \bbR^{L(m+p)}\to\bbR^{N}$ with
\begin{equation}
    \Psi(\xi) = \begin{bmatrix}
        \xi^\top & \psi_{L(m+p)+1} & \cdots & \psi_N
    \end{bmatrix}.
\end{equation}%
The observables $\psi_k$, $k=L(m+p)+1,...,N$, satisfy $\psi_k\in C^1(\bbR^{L(m+p)},\bbR)$ with $\psi_k(0)=0$.
Then, $\Psi$ is pointwise bounded by
\begin{equation}
    \|\xi\|_2 
    \leq \| \Psi(\xi) \|_2 
    \leq L_\Psi \|\xi\|_2
\end{equation}%
for all $\xi\in\bbE\coloneqq \bbU^L \times \bbY^L\subseteq\bbR^{L(m+p)}$ with some $L_\Psi\in\bbR$~\citep[Section 2.1]{strasser:schaller:worthmann:berberich:allgower:2026}.

Since the nonlinear system~\eqref{eq:dynamics-nonlinear} and its input-output representation~\eqref{eq:IO-representation-nonlinear} are unknown, we characterize the dynamics via data.
In particular, we collect input-output data $\cD$ consisting of $d$ trajectories of length $L+1$ as defined in~\eqref{eq:data-collection-IO}.
This trajectory allows us to arrange the data according to the defined extended state~\eqref{eq:extended-state} for delay depth $L$, i.e., we obtain the extended-state data 
\begin{equation}\label{eq:data-extended-state}
    \cD = \{(\xi_k,\xi_k^+,u_k)\}_{k=1}^{d}
\end{equation}%
consisting of $d$ triplets of extended state, its successor, and its input, where $u_k=u_0^{(k)}$.

In the following, we present two different data-driven surrogate representations derived using the data $\cD$. 
In Section~\ref{sec:IO-Koopman-surrogate-nominal}, we first assume the existence of an \emph{exact} Koopman bilinearization.
Since this is typically not the case for general nonlinear systems, we allow for perturbations in the Koopman-based bilinear surrogate in Section~\ref{sec:IO-Koopman-surrogate-perturbed}.

\subsubsection{Exact Koopman bilinearization}\label{sec:IO-Koopman-surrogate-nominal}
The first presented data-driven surrogate model relies on the following (possibly restrictive) assumption on the employed Koopman bilinearization, which allows for the use of straightforward arguments.
We stress, however, that the main results developed in this paper do not rely on this assumption, and the subsequent section considers a realistic and more general setting.
\begin{assumption}\label{ass:Koopman-bilinearization-exact}
    The Koopman operator action corresponding to the nonlinear input-output representation~\eqref{eq:IO-representation-nonlinear} admits an \emph{exact} finite-dimensional bilinear representation of the form
    \begin{equation}\label{eq:Koopman-bilinearization-exact}
        \Psi(\xi_{k+1}) = A_\mathrm{tr} \Psi(\xi_k) + B_{0,\mathrm{tr}} u_k + \tB_\mathrm{tr} (u_k\otimes \Psi(\xi_k)).
    \end{equation}%
    for all $\xi\in\bbE$ and $u\in\bbU$.
\end{assumption}
The unknown matrices $A_\mathrm{tr},B_{0,\mathrm{tr}},\tB_\mathrm{tr}$ of the bilinear Koopman representation~\eqref{eq:Koopman-bilinearization-exact} are estimated from measured data.
To this end, we solve the linear regression problem
\begin{equation}\label{eq:regression-bilinear-EDMD-nominal}
    \min_{A,B_0,\tB} 
    \left\|
        \Psi(\Xi^+) - A \Psi(\Xi) - B_0 U - \tB U_\Xi 
    \right\|_\mathrm{F}
\end{equation}%
with the data matrices
\begin{subequations}\label{eq:data-matrices-IO-regression}
    \begin{alignat}{2}
        \Psi(\Xi) &= \begin{bmatrix}
            \Psi(\xi_1) & \cdots & \Psi(\xi_{d})
        \end{bmatrix},
        &\qquad
        U &= \begin{bmatrix}
            u_1 & \cdots & u_{d}
        \end{bmatrix},
        \\
        \Psi(\Xi^+) &= \begin{bmatrix}
            \Psi(\xi_1^+) & \cdots & \Psi(\xi_d^+)
        \end{bmatrix},
        &\qquad
        U_\Xi &= \begin{bmatrix}
            u_1\otimes \Psi(\xi_1) & \cdots & u_{d} \otimes \Psi(\xi_{d})
        \end{bmatrix}.
    \end{alignat}%
\end{subequations}
The regression problem~\eqref{eq:regression-bilinear-EDMD-nominal} has a unique solution if 
\begin{equation}\label{eq:rank-condition-linear-regression}
    \rank \begin{bmatrix}
        \Psi(\Xi) \\
        U \\
        U_\Xi
    \end{bmatrix}
    = N + m + mN,
\end{equation}%
i.e., the stacked data matrix has full column rank.
Thus, we obtain
\begin{equation}\label{eq:least-squares-solution}
    \begin{bmatrix}
        A & B_0 & \tB
    \end{bmatrix}
    = \Psi(\Xi^+) \begin{bmatrix}
        \Psi(\Xi) \\
        U \\
        U_\Xi
    \end{bmatrix}^\dagger
    = \Psi(\Xi^+) \begin{bmatrix}
        \Psi(\Xi) \\
        U \\
        U_\Xi
    \end{bmatrix}^\top \left(
        \begin{bmatrix}
            \Psi(\Xi) \\
            U \\
            U_\Xi
        \end{bmatrix}
        \begin{bmatrix}
            \Psi(\Xi) \\
            U \\
            U_\Xi
        \end{bmatrix}^\top
    \right)^{-1}.
\end{equation}%
This leads to the following intermediate result.
\begin{proposition}\label{prop:Koopman-bilinearization-exact}
    Let Assumption~\ref{ass:Koopman-bilinearization-exact} and the rank condition~\eqref{eq:rank-condition-linear-regression} hold. 
    Then, any trajectory of the nonlinear input-output representation~\eqref{eq:IO-representation-nonlinear} satisfies
    \begin{subequations}\label{eq:bilinear-surrogate-nominal}
        \begin{align}
            \Psi(\xi_{k+1}) 
            &= A \Psi(\xi_k) + B_{0} u_k + \tB (u_k\otimes \Psi(\xi_k)) 
            \label{eq:bilinear-surrogate-nominal-dynamics}
            \\
            y_k 
            &= \begin{bmatrix}
                0_{p\times L(m+p-1)} & I_{p} & 0_{p\times N-L(m+p)}
            \end{bmatrix}
            \Psi(\xi_{k+1})
            \label{eq:bilinear-surrogate-nominal-output}
        \end{align}%
    \end{subequations}%
    with $A,B_0,\tilde B$ in~\eqref{eq:least-squares-solution} for all $\xi\in\bbE$ and $u\in\bbU$.
\end{proposition}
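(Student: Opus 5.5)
The plan is to show that the least-squares estimate~\eqref{eq:least-squares-solution} recovers the true matrices, $A=A_\mathrm{tr}$, $B_0=B_{0,\mathrm{tr}}$, $\tB=\tB_\mathrm{tr}$. Once this is established, the dynamics~\eqref{eq:bilinear-surrogate-nominal-dynamics} are simply Assumption~\ref{ass:Koopman-bilinearization-exact}, and the output equation follows from the structure of $\Psi$ and of $f_\xi$ in Proposition~\ref{prop:IO-representation-nonlinear}.

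\emph{Step 1 (exact data equation).} Each data triplet $(\xi_k,\xi_k^+,u_k)$ in~\eqref{eq:data-extended-state} comes from a trajectory of~\eqref{eq:dynamics-nonlinear}. By Proposition~\ref{prop:IO-representation-nonlinear}, this gives $\xi_k^+=f_\xi(\xi_k,u_k)$ with $\xi_k\in\bbE$ and $u_k\in\bbU$. Applying~\eqref{eq:Koopman-bilinearization-exact} column by column yields
\begin{equation*}
    \Psi(\Xi^+) = \begin{bmatrix} A_\mathrm{tr} & B_{0,\mathrm{tr}} & \tB_\mathrm{tr}\end{bmatrix} Z,
    \qquad
    Z \coloneqq \begin{bmatrix} \Psi(\Xi) \\ U \\ U_\Xi\end{bmatrix}.
\end{equation*}

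\emph{Step 2 (identification).} Substitute this into~\eqref{eq:least-squares-solution}. By the rank condition~\eqref{eq:rank-condition-linear-regression}, $ZZ^\top$ is invertible, so
\begin{equation*}
    Z Z^\dagger = Z Z^\top (Z Z^\top)^{-1} = I.
\end{equation*}
Hence $\begin{bmatrix} A & B_0 & \tB\end{bmatrix}=\begin{bmatrix} A_\mathrm{tr} & B_{0,\mathrm{tr}} & \tB_\mathrm{tr}\end{bmatrix}$. Assumption~\ref{ass:Koopman-bilinearization-exact} then gives~\eqref{eq:bilinear-surrogate-nominal-dynamics} for every trajectory, for all $\xi\in\bbE$ and $u\in\bbU$.

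\emph{Step 3 (output).} The first $L(m+p)$ entries of $\Psi(\xi_{k+1})$ are $\xi_{k+1}$ itself. By the shift structure of $f_\xi$ together with $h_\xi$, the last block of $\xi_{k+1}$, namely entries $L(m+p)-p+1$ through $L(m+p)$, equals $y_k$. The selector matrix in~\eqref{eq:bilinear-surrogate-nominal-output} picks out exactly these entries, which gives the output equation.

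\emph{Main obstacle.} The main difficulty is not analytic; it lies in checking consistency of the dimensions and rank bookkeeping. Full row rank of $Z$, equivalently invertibility of $ZZ^\top$, requires $d\ge N+m+mN$. The condition~\eqref{eq:rank-condition-linear-regression} should therefore be read as full row rank of $Z$, even though the paper says ``column rank''. I would state this explicitly. I would also verify that the zero-block width in the selector matrix is $L(m+p)-p$, so that its identity block lands on the last $p$ entries of $\xi_{k+1}$.
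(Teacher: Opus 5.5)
Your proposal is correct and follows the paper's proof: under the rank condition the least-squares estimate reproduces the true matrices (you make explicit the identity $ZZ^\dagger=I$, which the paper only asserts), so the dynamics are just the exactness assumption, and the output equation is read off as the last $p$ entries of $\xi_{k+1}$, i.e., of the first $L(m+p)$ entries of $\Psi(\xi_{k+1})$. Your two bookkeeping remarks are also right: the rank condition means full \emph{row} rank of the $(N+m+mN)\times d$ data matrix (hence $d\geq N+m+mN$), and the leading zero block of the selector should have width $L(m+p)-p$, since the stated width $L(m+p-1)$ gives the selector total width $N+p-L$, which is correct only when $L=p$.
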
%
\begin{proof}
    If the data is chosen such that the rank condition~\eqref{eq:rank-condition-linear-regression} holds, then the learning error is zero, i.e., the least-squares solution~\eqref{eq:least-squares-solution} corresponds to the true solution, i.e., $
        \begin{bmatrix}
            A & B_0 & \tB
        \end{bmatrix}
        = \begin{bmatrix}
            A_\mathrm{tr} & B_{0,\mathrm{tr}} & \tB_\mathrm{tr}
        \end{bmatrix}
    $. 
    Then, leveraging Assumption~\ref{ass:Koopman-bilinearization-exact} establishes~\eqref{eq:bilinear-surrogate-nominal-dynamics}.
    The output equation~\eqref{eq:bilinear-surrogate-nominal-output} directly follows from the nonlinear output equation~\eqref{eq:IO-representation-nonlinear-output}, i.e., 
    \begin{align}
        y_k 
        &= \left[\begin{array}{cccc|cccc}
            0 & 0 & \cdots & 0 & 0 & 0 & \cdots & I_p 
        \end{array}\right] 
        \xi_{k+1}
        \nonumber\\
        &= \left[\begin{array}{cccc|cccc}
            0 & 0 & \cdots & 0 & 0 & 0 & \cdots & I_p 
        \end{array}\right] 
        \begin{bmatrix}
            I_{L(m+p)} & 0_{L(m+p)\times N-L(m+p)}
        \end{bmatrix}
        \Psi(\xi_{k+1}).
    \end{align}%
    This establishes~\eqref{eq:bilinear-surrogate-nominal-output} and, thus, concludes the proof.    
\end{proof}
According to Proposition~\ref{prop:Koopman-bilinearization-exact}, we can design an extended-state-feedback controller for the nonlinear input-output representation~\eqref{eq:IO-representation-nonlinear} based on the finite-dimensional error-free bilinear surrogate representation~\eqref{eq:bilinear-surrogate-nominal}.
Importantly, an extended-state-feedback controller for~\eqref{eq:IO-representation-nonlinear} corresponds to an output-feedback controller for the underlying nonlinear system~\eqref{eq:dynamics-nonlinear}.

\subsubsection{Koopman bilinearization with proportionally bounded residual}\label{sec:IO-Koopman-surrogate-perturbed}
Since Assumption~\ref{ass:Koopman-bilinearization-exact} is typically hard to satisfy for general nonlinear systems, we loosen the assumption and allow for errors in the bilinear representation satisfying a proportional error bound.
\begin{assumption}\label{ass:Koopman-bilinearization-perturbed}
    The Koopman operator action corresponding to the nonlinear input-output representation~\eqref{eq:IO-representation-nonlinear} admits a finite-dimensional perturbed bilinear representation of the form
    \begin{equation}\label{eq:Koopman-bilinearization-perturbed}
        \Psi(\xi_{k+1}) = A_\mathrm{tr} \Psi(\xi_k) + B_{0,\mathrm{tr}} u_k + \tB_\mathrm{tr} (u_k\otimes \Psi(\xi_k)) + r_\Psi(\xi_k,u_k),
    \end{equation}%
    where the residual $r_\Psi(\xi_k,u_k)$ is proportionally bounded by
    \begin{equation}\label{eq:Koopman-bilinearization-proportional-error-bound}
        \|r_\Psi(\xi,u)\| \leq c_{\Psi,\xi} \|\Psi(\xi)\| + c_{\Psi,u} \|u\|, \qquad c_{\Psi,\xi},c_{\Psi,u} \geq 0
    \end{equation}%
    for all $\xi\in\bbE$ and $u\in\bbU$.
\end{assumption}
The proportional structure of bound~\eqref{eq:Koopman-bilinearization-proportional-error-bound} on the residual error is validated in~\citet[Theorem~3.1, Corollary~3.2]{strasser:schaller:worthmann:berberich:allgower:2026} and represents a standard assumption in Koopman-based control~\citep{strasser:worthmann:mezic:berberich:schaller:allgower:2026}.
This bound captures both the bilinear approximation error of the Koopman operator and the projection error arising from the finite-dimensional lifting function $\Psi$.
For the latter, the proportional structure is widely assumed in the literature~\citep{strasser:schaller:worthmann:berberich:allgower:2026} and explicitly verified, e.g., in the kernel setting~\citep{strasser:schaller:berberich:worthmann:allgower:2025}. 
Moreover,~\citet[Lemma~A.1]{strasser:schaller:worthmann:berberich:allgower:2026} establishes a proportional bound on the Koopman-bilinearization error for control-affine nonlinear systems.

As before, we estimate the unknown system matrices $A_\mathrm{tr},B_{0,\mathrm{tr}},\tB_\mathrm{tr}$ using the collected data, i.e., by solving for~\eqref{eq:least-squares-solution} if the rank condition~\eqref{eq:rank-condition-linear-regression} is satisfied.
This leads us to the first main result of this paper.
\begin{theorem}\label{thm:Koopman-bilinearization-perturbed}
    Let Assumption~\ref{ass:Koopman-bilinearization-perturbed} and the rank condition~\eqref{eq:rank-condition-linear-regression} hold. 
    Then, any trajectory of the nonlinear input-output representation~\eqref{eq:IO-representation-nonlinear} satisfies
    \begin{subequations}\label{eq:bilinear-surrogate-perturbed}
        \begin{align}
            \Psi(\xi_{k+1}) 
            &= A \Psi(\xi_k) + B_{0} u_k + \tB (u_k\otimes \Psi(\xi_k)) 
            + r_\Psi(\xi,u) + r_\Delta(\xi,u)
            \label{eq:bilinear-surrogate-perturbed-dynamics}
            \\
            y_k 
            &= \begin{bmatrix}
                0_{p\times L(m+p-1)} & I_{p} & 0_{p\times N-L(m+p)}
            \end{bmatrix}
            \Psi(\xi_{k+1}),
            \label{eq:bilinear-surrogate-perturbed-output}
        \end{align}%
    \end{subequations}%
    where $A,B_0,\tilde B$ solve the regression problem~\eqref{eq:regression-bilinear-EDMD-nominal} and where the residuals $r_\Psi(\xi,u)$ and $r_\Delta(\xi,u)$ are bounded by~\eqref{eq:Koopman-bilinearization-proportional-error-bound} and 
    \begin{equation}\label{eq:bound-r-Delta}
        \|r_\Delta(\xi,u)\|^2 \leq c_{\Delta,\xi}^2 \|\Psi(\xi)\|^2 + c_{\Delta,u}^2 \|u\|^2
    \end{equation}%
    with 
    \begin{equation}\label{eq:bound-r-Delta-constants}
        c_{\Delta,\xi} 
        = c_{\Delta,u} \sqrt{1 + \max_{u\in\bbU}\|u\|^2}
        ,\qquad
        c_{\Delta,u} 
        = \frac{
            \sqrt{
                2c_{\Psi,\xi}^2 \|\Psi(\Xi)\|_\mathrm{F}^2 + 2c_{\Psi,u}^2 \|U\|_\mathrm{F}^2
            }
        }{
            \sigma_{\min}\left(
                \begin{bmatrix}
                    \Psi(\Xi)\\U\\U_\Xi
                \end{bmatrix}
            \right)
        },
    \end{equation}%
    respectively, for all $\xi\in\bbE$ and $u\in\bbU$.
\end{theorem}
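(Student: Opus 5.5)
We define $r_\Delta$ as the gap between the true and the learned bilinear maps, $r_\Delta(\xi,u) \coloneqq (A_\mathrm{tr}-A)\Psi(\xi) + (B_{0,\mathrm{tr}}-B_0)u + (\tB_\mathrm{tr}-\tB)(u\otimes\Psi(\xi))$. With this choice, \eqref{eq:bilinear-surrogate-perturbed-dynamics} is simply Assumption~\ref{ass:Koopman-bilinearization-perturbed} rewritten. The output equation \eqref{eq:bilinear-surrogate-perturbed-output} follows exactly as in the proof of Proposition~\ref{prop:Koopman-bilinearization-exact}, because $\Psi$ contains $\xi$ as its first block and $y_k$ is the last output block of $\xi_{k+1}$ by Proposition~\ref{prop:IO-representation-nonlinear}. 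The remaining work is to bound $r_\Delta$.

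Write $Z$ for the stacked regressor matrix $\begin{bmatrix}\Psi(\Xi)^\top & U^\top & U_\Xi^\top\end{bmatrix}^\top$, $M_\mathrm{tr}=\begin{bmatrix}A_\mathrm{tr} & B_{0,\mathrm{tr}} & \tB_\mathrm{tr}\end{bmatrix}$, and $R=\begin{bmatrix}r_\Psi(\xi_1,u_1) & \cdots & r_\Psi(\xi_d,u_d)\end{bmatrix}$. Evaluating Assumption~\ref{ass:Koopman-bilinearization-perturbed} at the data points gives $\Psi(\Xi^+) = M_\mathrm{tr} Z + R$. Under the rank condition \eqref{eq:rank-condition-linear-regression}, $Z Z^\dagger = I$, so by \eqref{eq:least-squares-solution}
\begin{equation*}
    \begin{bmatrix}A & B_0 & \tB\end{bmatrix} = M_\mathrm{tr} + R Z^\dagger .
\end{equation*}
Therefore $r_\Delta(\xi,u) = -R Z^\dagger z(\xi,u)$, where $z(\xi,u)=\begin{bmatrix}\Psi(\xi)^\top & u^\top & (u\otimes\Psi(\xi))^\top\end{bmatrix}^\top$.

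We then bound the three factors separately. First, $\|Z^\dagger\| = 1/\sigma_{\min}(Z)$, where $\sigma_{\min}(Z)$ is the smallest nonzero singular value, which is positive by the full row rank of $Z$. Second, $\|R\|\le\|R\|_\mathrm{F}$, and applying \eqref{eq:Koopman-bilinearization-proportional-error-bound} columnwise with $(a+b)^2\le 2a^2+2b^2$ gives $\|R\|_\mathrm{F}^2 \le 2c_{\Psi,\xi}^2\|\Psi(\Xi)\|_\mathrm{F}^2 + 2c_{\Psi,u}^2\|U\|_\mathrm{F}^2$. Together these yield $\|R Z^\dagger\|\le c_{\Delta,u}$. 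Third,
\begin{equation*}
    \|z(\xi,u)\|^2 = \|\Psi(\xi)\|^2 + \|u\|^2 + \|u\|^2\|\Psi(\xi)\|^2 \le \Big(1+\max_{u\in\bbU}\|u\|^2\Big)\|\Psi(\xi)\|^2 + \|u\|^2 .
\end{equation*}
Here $\|u\otimes\Psi\|=\|u\|\|\Psi\|$ and $\max_{u\in\bbU}\|u\|$ exists because $\bbU$ is compact by Assumption~\ref{ass:smoothness}. Multiplying by $c_{\Delta,u}^2$ gives exactly \eqref{eq:bound-r-Delta} with the constants in \eqref{eq:bound-r-Delta-constants}.

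None of these steps is deep. The main point requiring care is the identity $ZZ^\dagger=I$: the rank condition must be read as full row rank of $Z$, which is needed for the inverse in \eqref{eq:least-squares-solution} to exist. The same reading fixes what $\sigma_{\min}(Z)$ means in \eqref{eq:bound-r-Delta-constants}.
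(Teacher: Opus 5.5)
Your proof is correct and follows the paper's argument. The paper obtains $\|[\Delta A\ \Delta B_0\ \Delta\tB]\|\le\sigma_{\max}(R_\Psi)/\sigma_{\min}(Z)$ by citing a least-squares result; your explicit identity $[A\ B_0\ \tB]=M_\mathrm{tr}+RZ^\dagger$ derives that bound directly, and the Frobenius-norm and Kronecker-norm estimates are then the same as the paper's. Your reading of the rank condition as full \emph{row} rank of the stacked matrix is also the correct one: the paper's phrase ``full column rank'' is a slip, since the pseudoinverse formula requires $ZZ^\top$ to be invertible.
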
%
\begin{proof}
    Due to the perturbation $r_\Psi$ in~\eqref{eq:Koopman-bilinearization-perturbed}, the least-squares solution $(A,B_0,\tB)$ does not necessarily align with the true solution $(A_\mathrm{tr},B_{0,\mathrm{tr}},\tB_\mathrm{tr})$.
    Thus, we need to characterize the mismatch between the true values and the least-squares estimate.
    For the sake of compactness, we introduce the abbreviation 
    \begin{equation}
        R_\Psi(\Xi,U) 
        = \begin{bmatrix}
            r_\Psi(\xi_1,u_1) & \cdots & r_\Psi(\xi_{d},u_{d})
        \end{bmatrix}
    \end{equation}%
    in the following.
    
    First, we exploit results from least-squares optimization~\citep[cf.][]{ziemann:tsiamis:lee:jedra:matni:pappas:2023} to derive the upper bound
    \begin{equation}\label{eq:proof-bound-Delta-literature}
        \left\|\begin{bmatrix}
            \Delta A & \Delta B_0 & \Delta \tB
        \end{bmatrix}\right\|
        \coloneqq
        \left\|
            \begin{bmatrix} A_\mathrm{tr} & B_{0,\mathrm{tr}} & \tB_\mathrm{tr} \end{bmatrix}
            - \begin{bmatrix} A & B_0 & \tB \end{bmatrix}
        \right\|
        \leq \frac{
            \sigma_{\max}(R_\Psi(\Xi,U)
        }{
            \sigma_{\min}\left(
                \begin{bmatrix}
                    \Psi(\Xi)\\U\\U_\Xi
                \end{bmatrix}
            \right)
        }.
    \end{equation}%
    This estimate is further refined by observing
    \begin{equation}
        \sigma_{\max}(R_\Psi(\Xi,U))
        = \|R_\Psi(\Xi,U)\|_2
        \leq \|R_\Psi(\Xi,U)\|_\mathrm{F}
        = \sqrt{
            \sum_{k=1}^{d} \|r_\Psi(\xi_k,u_k)\|_2^2
        }.
    \end{equation}%
    Then, using the proportional error bound~\eqref{eq:Koopman-bilinearization-proportional-error-bound}, we obtain
    \begin{align}
        \sigma_{\max}(R_\Psi(\Xi,U))
        &\leq \sqrt{
            \sum_{k=1}^{d} \left(c_{\Psi,\xi}\|\Psi(\xi_k)\|_2 + c_{\Psi,u} \|u_k\|_2\right)^2
        }
        \leq \sqrt{
            \sum_{k=1}^{d} 2c_{\Psi,\xi}^2\|\Psi(\xi_k)\|_2^2 + 2c_{\Psi,u}^2 \|u_k\|_2^2
        }
        \nonumber\\
        &= \sqrt{
            2c_{\Psi,\xi}^2 \sum_{k=1}^{d} \|\Psi(\xi_k)\|_2^2 + 2c_{\Psi,u}^2 \sum_{k=1}^{d} \|u_k\|_2^2
        }
        = \sqrt{
            2c_{\Psi,\xi}^2 \|\Psi(\Xi)\|_\mathrm{F}^2 + 2c_{\Psi,u}^2 \|U\|_\mathrm{F}^2
        }.
    \end{align}%
    Substituting this estimate into~\eqref{eq:proof-bound-Delta-literature} yields
    \begin{equation}\label{eq:proof-Delta-error}
        \left\|\begin{bmatrix}
            \Delta A & \Delta B_0 & \Delta \tB
        \end{bmatrix}\right\|
        \leq c_{\Delta,u}
    \end{equation}%
    with $c_{\Delta,u}$ defined in~\eqref{eq:bound-r-Delta-constants}.
    Hence, we establish~\eqref{eq:bilinear-surrogate-perturbed-dynamics} by rewriting~\eqref{eq:Koopman-bilinearization-perturbed} using $
        r_\Delta(\xi,u) 
        = \begin{bmatrix} 
            \Delta A & \Delta B_0 & \Delta \tB 
        \end{bmatrix}    
        \begin{bmatrix} 
            \Psi(\xi) \\ u \\ u\otimes \Psi(\xi)
        \end{bmatrix}
    $.
    The output equation~\eqref{eq:bilinear-surrogate-perturbed-output} can be deduced using similar arguments as in the proof of Proposition~\ref{prop:Koopman-bilinearization-exact}. 
    
    Further, the proportional error bound on $r_\Psi$ follows from~\eqref{eq:Koopman-bilinearization-proportional-error-bound} in Assumption~\ref{ass:Koopman-bilinearization-perturbed}. 
    It remains to show the bound~\eqref{eq:bound-r-Delta} on $r_\Delta$.
    To this end, we leverage the estimate~\eqref{eq:proof-Delta-error} to obtain
    \begin{align}
        \|r_\Delta(\xi,u)\|^2 
        &\leq 
        \left\| 
            \begin{bmatrix} 
                \Delta A & \Delta B_0 & \Delta \tB 
            \end{bmatrix}
        \right\|^2
        \left(
            \|\Psi(\xi)\|^2 + \|u\|^2 + \|u\otimes \Psi(\xi)\|^2
        \right)
        \nonumber\\
        &\leq 
        c_{\Delta,u}^2
        \left(
            \|\Psi(\xi)\|^2 + \|u\|^2 + \|u\otimes \Psi(\xi)\|^2
        \right).
    \end{align}%
    Finally, we use
    \begin{equation}
        \|u\otimes \Psi(\xi)\|
        = \|u\| \|\Psi(\xi)\|
        \leq \max_{u\in\bbU}\|u\| \|\Psi(\xi)\|.
    \end{equation}%
    to establish the error bound~\eqref{eq:bound-r-Delta} with the constants $c_{\Delta,\xi}$ and $c_{\Delta,u}$ defined in~\eqref{eq:bound-r-Delta-constants}.
\end{proof}
Theorem~\ref{thm:Koopman-bilinearization-perturbed} establishes a suitable bilinear input-output characterization of the underlying nonlinear system for a subsequent robust controller design.
As already mentioned earlier, designing a lifted-state-feedback controller for the Koopman surrogate~\eqref{eq:Koopman-bilinearization-perturbed} and, thus, for the input-output representation~\eqref{eq:IO-representation-nonlinear} with extended state $\xi$, corresponds to an \emph{output}-feedback controller for the underlying nonlinear system~\eqref{eq:dynamics-nonlinear}.

The proof of Theorem~\ref{thm:Koopman-bilinearization-perturbed} builds on the same least-squares estimation technique used in~\citet{schmitz:bold:philipp:rosenfelder:eberhard:ebel:worthmann:2025} for local affine-linear regression, where the estimation error is controlled by bounding the residual and the smallest singular value of the data matrix separately.
Two additional layers of work arise from the Koopman setting.
First, the residual is not an externally bounded noise but a structural approximation error, whose norm must be estimated using the proportional bound~\eqref{eq:Koopman-bilinearization-proportional-error-bound}. 
Second, the resulting matrix-level mismatch between true and estimated system matrices must be propagated to a pointwise residual bound, which requires the bilinear structure of the surrogate model.
Consequently, the constants $c_{\Delta,\xi}$ and $c_{\Delta,u}$ in~\eqref{eq:bound-r-Delta-constants} explicitly reflect both the quality of the available data, through $\sigma_{\min}$, and the approximation quality of the lifting function $\Psi$, through the proportionality constants $c_{\Psi,\xi}$ and $c_{\Psi,u}$ in~\eqref{eq:Koopman-bilinearization-proportional-error-bound}.
\begin{remark}
    If the full state is measurable, i.e., $h(x,u)=x$, Theorem~\ref{thm:Koopman-bilinearization-perturbed} provides an alternative characterization of the \emph{learning} error of the Koopman operator approximation from input-state data.
    In particular, in this case, no extended state is needed, but we can replace $\xi$ by $x$ to obtain
    \begin{subequations}\label{eq:state-data-case-surrogate}
        \begin{align}
            \Psi(x_{k+1}) &= A \Psi(x_k) + B_0 u_k + \tB (u_k \otimes \Psi(x_k)) + r_\Psi(x,u) + r_\Delta(x,u),
            \\
            \|r_\Psi(x,u)\| &\leq c_{\Psi,x} \|\Psi(x)\| + c_{\Psi,u} \|u\|, 
            \label{eq:state-data-case-residual-Psi}
            \\
            \|r_\Delta(x,u)\|^2 &\leq c_{\Delta,x}^2 \|\Psi(x)\|^2 + c_{\Delta,u}^2 \|u\|^2
        \end{align}
    \end{subequations}
    for all $x\in\bbX$ and $u\in\bbU$.
    Here, similar to Assumption~\ref{ass:Koopman-bilinearization-perturbed}, we assume the existence of a finite-dimensional perturbed bilinear representation of the form 
    \begin{equation}
        \Psi(x_{k+1}) = A_\mathrm{tr}(x_k) + B_{0,\mathrm{tr}} u_k + \tB_\mathrm{tr} (u_k \otimes \Psi(x_k)) + r_\Psi(x_k,u_k),
    \end{equation}%
    where $r_\Psi$ is bounded by~\eqref{eq:state-data-case-residual-Psi} for some $c_{\Psi,x},c_{\Psi,u}\geq 0$, and compute $A,B_0,\tB$ based on the linear regression problem 
    \begin{equation}
        \min_{A,B_0,\tB} 
        \left\|
            \Psi(X^+) - A \Psi(X) - B_0 U - \tB U_X 
        \right\|_\mathrm{F}
    \end{equation}%
    with the data matrices $X$, $X^+$, $U$, $U_X$ defined analogously to the matrices in~\eqref{eq:data-matrices-IO-regression} and 
    \begin{equation}
        c_{\Delta,x} 
        = c_{\Delta,u} \sqrt{1 + \max_{u\in\bbU}\|u\|^2}
        ,\qquad
        c_{\Delta,u} 
        = \frac{
            \sqrt{
                2c_{\Psi,x}^2 \|\Psi(X)\|_\mathrm{F}^2 + 2c_{\Psi,u}^2 \|U\|_\mathrm{F}^2
            }
        }{
            \sigma_{\min}\left(
                \begin{bmatrix}
                    \Psi(X)\\U\\U_X
                \end{bmatrix}
            \right)
        }.
    \end{equation}%
    In contrast, SafEDMD~\citep{strasser:schaller:worthmann:berberich:allgower:2025,strasser:schaller:worthmann:berberich:allgower:2026} derive proportional error bounds for bilinear approximations of the controlled Koopman generator and controlled Koopman operator by combining multiple autonomous variants and building on the respective error bounds in~\citet[Theorem~3]{schaller:worthmann:philipp:peitz:nuske:2023} and~\citet[Theorem~4]{nuske:peitz:philipp:schaller:worthmann:2023}, respectively.
    This, however, requires multiple data sets collected under specific \emph{constant} control inputs with i.i.d.\ samples, a sampling strategy that may be restrictive in practice. 
    Instead, Theorem~\ref{thm:Koopman-bilinearization-perturbed} and, more specifically, the state-data surrogate~\eqref{eq:state-data-case-surrogate} characterize the learning error for \emph{any} data trajectory, without restricting to particular input choices and data requirements, relying instead on results from noisy least-squares optimization.
    Thus, Theorem~\ref{thm:Koopman-bilinearization-perturbed} is of independent interest beyond the input-output case, and may allow for more flexible sampling strategies, a broader class of control problems to be addressed within the Koopman framework, and potentially more interpretable error bounds due to its direct reliance on noisy least-squares optimization.
\end{remark}
\begin{remark}
    If the input-output data is subject to noise, i.e., the nonlinear input-output dynamics~\eqref{eq:IO-representation-nonlinear} satisfy 
    \begin{equation}
        \xi_{k+1} = f_\xi(\xi_k,u_k) + w_k
    \end{equation}%
    for $\|w_k\|\leq \bar{w}$, the estimate can be straightforwardly adapted. 
    In particular, the noise enters the lifted data matrix $\Psi(\Xi^+)$ and therefore deteriorates the least-squares estimate.
    Then, the mismatch between true values and the least-squares solution is bounded by
    \begin{equation}
        \left\|\begin{bmatrix}
            \Delta A & \Delta B_0 & \Delta \tB
        \end{bmatrix}\right\|
        \leq \frac{
            \sqrt{
                2c_{\Psi,\xi}^2 \|\Psi(\Xi)\|_\mathrm{F}^2 + 2c_{\Psi,u}^2 \|U\|_\mathrm{F}^2
            }
            + \sqrt{d} L_\Psi \bar{w}
        }{
            \sigma_{\min}\left(
                \begin{bmatrix}
                    \Psi(\Xi)\\U\\U_\Xi
                \end{bmatrix}
            \right)
        },
    \end{equation}%
    where we assume Lipschitz continuity of the lifting function $\Psi$.
    Thus, the structure of the residual bound on $r_\Delta(\xi,u)$ remains unchanged.
\end{remark}

\begin{remark}
    In this section, we establish a Koopman-based bilinear \emph{input-output} surrogate representation with guaranteed \emph{proportional} error bounds. 
    This is particularly desirable as it allows for the application of Koopman-based controller designs for bilinear systems with closed-loop guarantees (see~\citet{strasser:worthmann:mezic:berberich:schaller:allgower:2026} and the references therein).
    Here, we exploit the (nonlinear) input-output representation~\eqref{eq:IO-representation-nonlinear} of the unknown nonlinear system~\eqref{eq:dynamics-nonlinear}, relying on uniform observability of the underlying system.
    A possible alternative approach would be to first Koopman bilinearize the original nonlinear system, i.e., employing a state-based lifting function to obtain a bilinear surrogate in the lifted Koopman space, again with proportionally bounded residual error.
    Since this bilinear representation is unknown and the (state-based) lifting function is not accessible, we could use again input-output data to construct a representation with the same input-output behavior as the bilinear surrogate using an extended state.
    Comparing both approaches, i.e., 
    1) input-output characterization for nonlinear systems, then Koopman bilinearization, or
    2) Koopman bilinearization, then input-output characterization for bilinear systems, 
    is an interesting direction for future research.
\end{remark}

\subsection{Koopman-based output-feedback controller design}\label{sec:IO-Koopman-controller-design}
In the following, we use the Koopman-based input-output surrogate established in Theorem~\ref{thm:Koopman-bilinearization-perturbed} to design a robust output-feedback controller with closed-loop guarantees for the nonlinear system~\eqref{eq:dynamics-nonlinear}.
To this end, we build on the state-feedback controller designs established in~\citet{strasser:schaller:worthmann:berberich:allgower:2026} using linear matrix inequalities and in~\citet{strasser:berberich:allgower:2025} using SOS optimization.
Then, we use the proposed extended-state representation~\eqref{eq:IO-representation-nonlinear} of the underlying nonlinear system~\eqref{eq:dynamics-nonlinear} to define an output-feedback control law.
While the generalization from state-feedback to output-feedback design can be done for any robust controller design for bilinear Koopman surrogates with (proportionally) bounded residual error, we rely on the SOS-based design proposed in~\citet[Corollary 4]{strasser:berberich:allgower:2025} as it provides the least conservative (guaranteed) closed-loop properties available in the literature. 

Before stating our main theorem, we introduce some necessary SOS notation.
We denote the set of all polynomials $s$ in the variable $z\in\bbR^N$ with degree $n_d$ and real coefficients by $\bbR[z,n_d]$.
Similarly, we write $\bbR[z,n_d]^{p\times q}$ for the set of all $p \times q$-matrices with elements in $\bbR[z,n_d]$.
We call $S\in\bbR[z,2n_d]^{p\times p}$ an SOS matrix in $x$, denoted by $S\in\mathrm{SOS}[z,2n_d]^p$, if it can be decomposed as $S=T^\top T$ for some $T\in\bbR[z,n_d]^{q\times p}$.
If $(S-\varepsilon I_p)\in\mathrm{SOS}[z,2n_d]^p$ for some $\varepsilon>0$, we say $S$ is strictly SOS and write $S\in\mathrm{SOS}_+[z,2n_d]^p$.
\begin{theorem}[Exponentially stabilizing output-feedback controller design]\label{thm:Koopman-control-output-feedback}
    Suppose Assumption~\ref{ass:Koopman-bilinearization-perturbed}, ensuring an approximate bilinear Koopman representation, holds.
    Let $\alpha\in\bbN$, $\beta\in\bbN_0$ with $\alpha\geq \beta$, $P=P^\top\succ 0$ of size $N\times N$, $L_\mathrm{n}\in\bbR[z,2\alpha-1]^{m\times N}$, $\tau_1,\tau_2\in\mathrm{SOS}_+[z,2\beta]$, $u_\mathrm{d}\in\mathrm{SOS}_+[z,2\alpha]$, and $\rho > 0$ such that
    \begin{equation}\label{eq:Koopman-control-output-feedback}
        \begin{bmatrix}
            q_{11}(z)
            & 0 
            & 0 
            & 0 
            & 0
            & q_{15}(z)
            \\
            0 
            & \frac{\tau_1(z)}{2 c_{\Psi,\xi}^2} I_N
            & 0 
            & 0 
            & 0 
            & u_\mathrm{d}(z) P
            \\
            0 
            & 0 
            & \frac{\tau_1(z)}{2 c_{\Psi,u}^2} I_m
            & 0 
            & 0 
            & L_\mathrm{n}(z) 
            \\
            0 
            & 0
            & 0 
            & \frac{\tau_2(z)}{c_{\Delta,\xi}^2} I_N 
            & 0
            & u_\mathrm{d}(z) P 
            \\
            0 
            & 0 
            & 0 
            & 0 
            & \frac{\tau_2(z)}{c_{\Delta,u}^2} I_m
            & L_\mathrm{n}(z)
            \\
            q_{15}(z)^\top
            & u_\mathrm{d}(z) P
            & L_\mathrm{n}(z)^\top 
            & u_\mathrm{d}(z) P 
            & L_\mathrm{n}(z)^\top
            & u_\mathrm{d}(z) (P - \rho I_N)
        \end{bmatrix}
        \in\mathrm{SOS}[z,2\alpha]^{4N+2n_u}
    \end{equation}%
    with 
    \begin{equation}
        q_{11}(z) = u_\mathrm{d}(z) P - (\tau_1(z) + \tau_2(z)) I_N,
        \qquad
        q_{15}(z) = u_\mathrm{d}(z)AP + B_0 L_\mathrm{n}(z) + \tB(L_\mathrm{n}(z)\otimes z),
    \end{equation}%
    and $z\in\bbR^N$ holds.
    Then, the output-feedback control law 
    \begin{equation}\label{eq:control-law-output-feedback}
        \mu(\xi) = \frac{1}{u_\mathrm{d}(\Psi(\xi))} L_\mathrm{n}(\Psi(\xi)) P^{-1} \Psi(\xi)
    \end{equation}%
    achieves \emph{robust exponential stability} of the perturbed bilinear Koopman surrogate~\eqref{eq:bilinear-surrogate-perturbed-dynamics} and, thus, \emph{exponential stability} of the nonlinear extended-state dynamics~\eqref{eq:IO-representation-nonlinear-state} for all initial conditions $\hat{\xi}\in\Omega(c^*)$ with
    \begin{subequations}\label{eq:Omega-c-star}
        \begin{align}
            \Omega(c) &= \big\{
                \xi\in\bbR^{L(m+p)}
                \mid
                \Psi(\xi)^\top P^{-1} \Psi(\xi) \leq c
            \big\}
            \\
            c^* &= \argmax\big\{
                c\in\bbR_+
                \mid 
                \Omega(c)\subseteq \bbE \text{ and } \mu(\xi)\in\bbU \text{ for all } \xi\in\Omega(c)
            \big\}.
        \end{align}
    \end{subequations}
\end{theorem}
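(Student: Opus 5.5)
Our plan is to prove the statement in three steps: a quadratic Lyapunov decrease for the lifted surrogate, a transfer of that decrease to the extended state $\xi$, and an invariance argument for the sublevel set $\Omega(c^*)$.

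\emph{Step 1: Lyapunov decrease for the lifted surrogate.} We take the candidate $V(\xi)=\Psi(\xi)^\top P^{-1}\Psi(\xi)$. By Theorem~\ref{thm:Koopman-bilinearization-perturbed}, along trajectories of~\eqref{eq:IO-representation-nonlinear-state} we have
\begin{equation*}
    \Psi(\xi_{k+1}) = \bigl(A+\tB(\mu(\xi_k)\otimes I_N)\bigr)\Psi(\xi_k)+B_0\mu(\xi_k)+r_\Psi+r_\Delta .
\end{equation*}
Fix $z=\Psi(\xi_k)$ and set $v=P^{-1}z/u_\mathrm{d}(z)$. Then $u_\mathrm{d}(z)Pv=z$ and $\mu=L_\mathrm{n}(z)v$. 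Using $\tB(\mu\otimes z)=\tB(L_\mathrm{n}(z)\otimes z)v$, we get $z^+=q_{15}(z)v+r_\Psi+r_\Delta$.

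The residuals satisfy quadratic bounds in terms of $v$. From~\eqref{eq:Koopman-bilinearization-proportional-error-bound} and $(a+b)^2\le 2a^2+2b^2$,
\begin{equation*}
    \|r_\Psi\|^2\le 2c_{\Psi,\xi}^2\|u_\mathrm{d}Pv\|^2+2c_{\Psi,u}^2\|L_\mathrm{n}v\|^2 .
\end{equation*}
From~\eqref{eq:bound-r-Delta},
\begin{equation*}
    \|r_\Delta\|^2\le c_{\Delta,\xi}^2\|u_\mathrm{d}Pv\|^2+c_{\Delta,u}^2\|L_\mathrm{n}v\|^2 .
\end{equation*}
These are exactly the quadratic constraints that the multipliers $\tau_1,\tau_2$ enforce through the blocks in~\eqref{eq:Koopman-control-output-feedback}.

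\emph{Step 2: From the SOS condition to the decrease.} The SOS condition makes the matrix nonnegative for every $z$. We apply a Schur complement with respect to the diagonal blocks $2$ through $5$, which are positive since $\tau_i>0$. We then apply the S-procedure: multiply the resulting inequality from both sides by $(w,v)$, where $w$ is a generic residual direction, and use the residual bounds from Step 1. We choose the sign and arrangement so that the $(1,1)$ block $u_\mathrm{d}P-(\tau_1+\tau_2)I$ handles $z^+$.

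The intended outcome is
\begin{equation*}
    (z^+)^\top \frac{P^{-1}}{u_\mathrm{d}(z)} z^+ \le v^\top u_\mathrm{d}(P-\rho I)v .
\end{equation*}
Equivalently, after scaling, $V(\xi_{k+1})\le V(\xi_k)-\rho\,u_\mathrm{d}\|v\|^2$. Since $u_\mathrm{d}$ is strictly SOS and continuous, it is bounded above and below on the compact set $\Psi(\bbE)$. Hence $\rho\,u_\mathrm{d}\|v\|^2\ge \epsilon V(\xi_k)$ for some $\epsilon>0$, which gives $V(\xi_{k+1})\le(1-\epsilon)V(\xi_k)$. This mirrors the dualization argument behind~\citet[Corollary~4]{strasser:berberich:allgower:2025}, and we would follow it closely.

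\emph{Step 3: Invariance and exponential stability in $\xi$.} The bounds in Theorem~\ref{thm:Koopman-bilinearization-perturbed} hold only for $\xi\in\bbE$ and $u\in\bbU$, which is why $c^*$ is defined as it is. Starting from $\hat\xi\in\Omega(c^*)$, we argue by induction: $\xi_k\in\Omega(c^*)\subseteq\bbE$ and $\mu(\xi_k)\in\bbU$ make the bounds valid, so $V$ decreases and $\xi_{k+1}\in\Omega(c^*)$.

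Exponential convergence then follows from the sandwich $\|\xi\|\le\|\Psi(\xi)\|\le L_\Psi\|\xi\|$. This gives
\begin{equation*}
    \|\xi_k\|^2\le \frac{\lambda_{\max}(P)L_\Psi^2}{\lambda_{\min}(P)}(1-\epsilon)^k\|\hat\xi\|^2 .
\end{equation*}
Because the surrogate holds exactly along true trajectories of~\eqref{eq:IO-representation-nonlinear-state}, robust stability of the surrogate transfers to the nonlinear extended-state dynamics.

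\emph{Expected obstacle.} The main difficulty is Step 2. We must check that the block arrangement of~\eqref{eq:Koopman-control-output-feedback}, with $q_{15}$ off-diagonal, the residual channels entering through the first block, and $(P-\rho I)$ in the last block, really yields the decrease after the Schur complement and S-procedure. This is delicate because $z$ appears both as a polynomial variable and in $v$. To handle it, we would first treat the lossless case $r=0$ to confirm the $P^{-1}$ scaling, and then add the multiplier terms.
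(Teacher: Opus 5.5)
Your route is sound and differs from the paper's. The paper rewrites \eqref{eq:Koopman-control-output-feedback} by Schur complements and a permutation, and passes to a dual quadratic matrix inequality via the dualization lemma; this is where $P^{-1}$ enters. It then invokes a generalized S-procedure with two separate QMIs for $r_\Psi$ and $r_\Delta$, deferring the remaining details to the SOS state-feedback result it builds on. You stay on the primal side. The substitution $v=P^{-1}z/u_\mathrm{d}(z)$ makes $q_{15}(z)v$ the nominal successor, and makes $u_\mathrm{d}Pv=z$ and $L_\mathrm{n}v=\mu$ the arguments of the residual bounds. So no dualization is needed, and you obtain an explicit Lyapunov decrease. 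The paper's route is the standard multiplier machinery and carries over to other uncertainty descriptions; yours is elementary and self-contained. Steps~1 and~3 are correct. Your worry about $z$ appearing both as the polynomial variable and inside $v$ is unfounded: the SOS condition gives pointwise positive semidefiniteness, and for fixed $z$ you may test with any vectors.

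Step~2, however, is the whole content of the theorem, and you only state its target. The vector to test with in the first block is not a residual direction; that cannot produce $P^{-1}$. After the Schur complement of blocks 2--5 (the $(1,j)$ blocks vanish for $j=2,\dots,5$), you have for all $s,v\in\bbR^N$
\[
s^\top q_{11}s+2s^\top q_{15}v+u_\mathrm{d}v^\top(P-\rho I_N)v-\tfrac{1}{\tau_1}\beta_\Psi(v)-\tfrac{1}{\tau_2}\beta_\Delta(v)\ge 0 .
\]
Here $\beta_\Psi(v)=2c_{\Psi,\xi}^2\|u_\mathrm{d}Pv\|^2+2c_{\Psi,u}^2\|L_\mathrm{n}v\|^2\ge\|r_\Psi\|^2$ and $\beta_\Delta(v)=c_{\Delta,\xi}^2\|u_\mathrm{d}Pv\|^2+c_{\Delta,u}^2\|L_\mathrm{n}v\|^2\ge\|r_\Delta\|^2$. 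Choose $s=-u_\mathrm{d}^{-1}P^{-1}z^+$ with $z^+=q_{15}v+r_\Psi+r_\Delta$. Then $s^\top q_{11}s=u_\mathrm{d}^{-1}(z^+)^\top P^{-1}z^+-(\tau_1+\tau_2)\|s\|^2$ and $2s^\top q_{15}v=-2u_\mathrm{d}^{-1}(z^+)^\top P^{-1}z^+-2s^\top(r_\Psi+r_\Delta)$. The inequality therefore rearranges to
\begin{align*}
\tfrac{1}{u_\mathrm{d}}(z^+)^\top P^{-1}z^+
&\le u_\mathrm{d}v^\top(P-\rho I_N)v
-\bigl\|\sqrt{\tau_1}\,s+\tfrac{1}{\sqrt{\tau_1}}r_\Psi\bigr\|^2
-\bigl\|\sqrt{\tau_2}\,s+\tfrac{1}{\sqrt{\tau_2}}r_\Delta\bigr\|^2\\
&\quad+\tfrac{1}{\tau_1}\bigl(\|r_\Psi\|^2-\beta_\Psi(v)\bigr)
+\tfrac{1}{\tau_2}\bigl(\|r_\Delta\|^2-\beta_\Delta(v)\bigr),
\end{align*}
which is exactly your intended inequality. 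Thus the term $-(\tau_1+\tau_2)I_N$ in $q_{11}$ absorbs the two residual channels, each completing its own square. The channels must be kept separate rather than lumped into one $w$, and no invertibility of $q_{11}$ is needed.

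Finally, fix the scaling. Multiplying by $u_\mathrm{d}$ gives
\[
V(\xi_{k+1})\le V(\xi_k)-\rho u_\mathrm{d}^2\|v\|^2=V(\xi_k)-\rho\|P^{-1}\Psi(\xi_k)\|^2\le\bigl(1-\rho/\lambda_{\max}(P)\bigr)V(\xi_k),
\]
where $P\succeq\rho I_N$ follows from the $(6,6)$ block. Your compactness bound on $u_\mathrm{d}$ is therefore unnecessary. Your stated decrease $-\rho u_\mathrm{d}\|v\|^2$ is also not what the inequality yields when $u_\mathrm{d}<1$.
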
%
\begin{proof}
    The proof follows the arguments in~\citet[Corollary 4]{strasser:berberich:allgower:2025}.
    However, we generalize the result to two different residual error bounds on $r_\Psi$ and $r_\Delta$, which are required in the present paper due to the surrogate structure in~\eqref{eq:bilinear-surrogate-perturbed}.
    In particular, we account for an additional uncertainty channel in the robust controller design.
    In the following, we provide the key differences relative to the proof in~\citet[Corollary 4]{strasser:berberich:allgower:2025}.
    
    We introduce $K_\mathrm{n}(z) = L_\mathrm{n}(z) P^{-1}$ and 
    \begin{equation}
        \cA(z) = u_\mathrm{d}(z) A + B_0 K_\mathrm{n}(z) + \tB(K_\mathrm{n}(z)\otimes z),
    \end{equation}%
    where we exploit $L_\mathrm{n}(z)\otimes z = (K_\mathrm{n}(z)\otimes z)P$.
    Then, substituting $K_\mathrm{n}(z)$ and $\cA(z)$ into~\eqref{eq:Koopman-control-output-feedback}, applying the Schur complement twice, and permuting rows and columns~\citep[cf.][]{strasser:berberich:allgower:2025} yields
    \begin{equation}
        \mathfrak{B}^\top 
        \left[\begin{array}{cc|ccc|ccc|c}
            -\frac{1}{u_\mathrm{d}(z)} P & 0
            & 0 & 0 & 0
            & 0 & 0 & 0
            & 0
            \\
            0 & u_\mathrm{d}(z) P
            & 0 & 0 & 0 
            & 0 & 0 & 0 
            & 0
            \\\hline
            0 & 0
            & \frac{\tau_1(z)}{2 c_{\Psi,\xi}^2} I_N & 0 & 0 
            & 0 & 0 & 0
            & 0
            \\
            0 & 0
            & 0 & \frac{\tau_1(z)}{2 c_{\Psi,u}^2} I_m & 0 
            & 0 & 0 & 0
            & 0
            \\
            0 & 0
            & 0 & 0 & -\tau(z) I_N
            & 0 & 0 & 0
            & 0
            \\\hline
            0 & 0
            & 0 & 0 & 0
            & \frac{\tau_2(z)}{c_{\Delta,\xi}^2} I_N & 0 & 0 
            & 0
            \\
            0 & 0
            & 0 & 0 & 0
            & 0 & \frac{\tau_2(z)}{2 c_{\Delta,u}^2} I_m & 0 
            & 0
            \\
            0 & 0
            & 0 & 0 & 0
            & 0 & 0 & -\tau(z) I_N
            & 0
            \\\hline
            0 & 0
            & 0 & 0 & 0
            & 0 & 0 & 0
            & \frac{u_\mathrm{d}(z)}{\rho} P^{2}
        \end{array}\right]
        \mathfrak{B}
        \succeq 0
    \end{equation}%
    for all $z\in\bbR^N$ with
    \begin{equation}
        \mathfrak{B}^\top = \left[\begin{array}{cc|ccc|ccc|c}
            \cA(z) & -I_N
            & 0 & 0 & I_N
            & 0 & 0 & I_N
            & 0
            \\
            u_\mathrm{d}(z) I_N & 0
            & -I_N & 0 & 0
            & 0 & 0 & 0
            & 0
            \\            
            K_\mathrm{n}(z) & 0
            & 0 & -I_m & 0
            & 0 & 0 & 0 
            & 0
            \\
            u_\mathrm{d}(z) I_N & 0
            & 0 & 0 & 0 
            & -I_N & 0 & 0 
            & 0
            \\
            K_\mathrm{n}(z) & 0
            & 0 & 0 & 0
            & 0 & -I_m & 0
            & 0
            \\
            u_\mathrm{d}(z) I_N & 0 
            & 0 & 0 & 0 
            & 0 & 0 & 0
            & -I_N
        \end{array}\right].
    \end{equation}%
    Applying the dualization lemma~\citep[Lemma 4.9]{scherer:weiland:2000} yields
    \begin{equation}\label{eq:proof-dualized-QMI}
        \resizebox{0.985\textwidth}{!}{$\displaystyle 
            \tilde{\mathfrak{B}}^\top 
            \left[\begin{array}{cc|ccc|ccc|c}
                -u_\mathrm{d}(z) P^{-1} & 0
                & 0 & 0 & 0
                & 0 & 0 & 0
                & 0
                \\
                0 & \frac{1}{u_\mathrm{d}(z)}P^{-1}
                & 0 & 0 & 0 
                & 0 & 0 & 0 
                & 0
                \\\hline
                0 & 0
                & \frac{2 c_{\Psi,\xi}^2}{\tau_1(z)} I_N & 0 & 0 
                & 0 & 0 & 0
                & 0
                \\
                0 & 0
                & 0 & \frac{2 c_{\Psi,u}^2}{\tau_1(z)} I_m & 0 
                & 0 & 0 & 0
                & 0
                \\
                0 & 0
                & 0 & 0 & -\frac{1}{\tau(z)} I_N
                & 0 & 0 & 0
                & 0
                \\\hline
                0 & 0
                & 0 & 0 & 0
                & \frac{c_{\Delta,\xi}^2}{\tau_2(z)} I_N & 0 & 0 
                & 0
                \\
                0 & 0
                & 0 & 0 & 0
                & 0 & \frac{2 c_{\Delta,u}^2}{\tau_2(z)} I_m & 0 
                & 0
                \\
                0 & 0
                & 0 & 0 & 0
                & 0 & 0 & -\frac{1}{\tau(z)} I_N
                & 0
                \\\hline
                0 & 0
                & 0 & 0 & 0
                & 0 & 0 & 0
                & \frac{\rho}{u_\mathrm{d}(z)} P^{-2}
            \end{array}\right]
            \tilde{\mathfrak{B}}
            \preceq 0
        $}
    \end{equation}%
    for all $z\in\bbR^N$ with
    \begin{equation}
        \tilde{\mathfrak{B}}^\top = \left[\begin{array}{cc|ccc|ccc|c}
            I_N & \cA(z)^\top
            & u_\mathrm{d}(z) I_N & K_\mathrm{n}(z)^\top & 0
            & u_\mathrm{d}(z) I_N & K_\mathrm{n}(z)^\top & 0
            & u_\mathrm{d}(z) I_N
            \\
            0 & I_N 
            & 0 & 0 & I_N
            & 0 & 0 & I_N
            & 0
        \end{array}\right],
    \end{equation}%
    where the outer matrix $\mathfrak{B}$ follows according to the discussion in~\citet[Section 8.1.2]{scherer:weiland:2000}
    Further, we observe that the proportional error bounds~\eqref{eq:Koopman-bilinearization-proportional-error-bound} and~\eqref{eq:bound-r-Delta} on the residuals $r_\Psi$ and $r_\Delta$ imply the quadratic matrix inequalities 
    \begin{equation}
        \begin{bmatrix}
            \star
        \end{bmatrix}^\top 
        \begin{bmatrix}
            c_{\Delta,\xi}^2 I_N & 0 & 0 \\
            0 & c_{\Delta,u}^2 I_m & 0 \\
            0 & 0 & -I_N
        \end{bmatrix}
        \begin{bmatrix}
            \Psi(\xi) \\ u \\ r_\Delta(\xi,u)
        \end{bmatrix}
        \geq 0,
        \quad
        \begin{bmatrix}
            \star
        \end{bmatrix}^\top
        \begin{bmatrix}
            2c_{\Psi,\xi}^2 I_N & 0 & 0 \\
            0 & 2c_{\Psi,u}^2 I_m & 0 \\
            0 & 0 & -I_N
        \end{bmatrix} 
        \begin{bmatrix}
            \Psi(\xi) \\ u \\ r_\Psi(\xi,u)
        \end{bmatrix}
        \geq 0,
    \end{equation}%
    respectively.
    Then, exploiting the generalized S-procedure~\citep[Lemma 2.1]{tan:2006},~\eqref{eq:proof-dualized-QMI} implies that the robust controller $\mu(\xi)$ exponentially stabilizes the perturbed bilinear surrogate~\eqref{eq:bilinear-surrogate-perturbed-dynamics} and, thus, with Assumption~\ref{ass:Koopman-bilinearization-perturbed}, also the underlying nonlinear extended-state dynamics~\eqref{eq:IO-representation-nonlinear-state}~\citep[compare][]{strasser:berberich:allgower:2025}.
\end{proof}
Theorem~\ref{thm:Koopman-control-output-feedback} establishes, to the best of the authors' knowledge, the first Koopman-based output-feedback controller design method with closed-loop guarantees for the underlying nonlinear system.
In particular, we establish exponential stability of the extended-state system~\eqref{eq:IO-representation-nonlinear}, which has the equivalent input-output behavior as the underlying nonlinear system (compare Proposition~\ref{prop:IO-representation-nonlinear} under Assumptions~\ref{ass:smoothness} and~\ref{ass:lower-bound-Jacobian-observability-map}).
Hence, Theorem~\ref{thm:Koopman-control-output-feedback} guarantees closed-loop guarantees for the nonlinear system~\eqref{eq:dynamics-nonlinear} from input-output data.
The parametrization used in Theorem~\ref{thm:Koopman-bilinearization-perturbed} for the Koopman-based surrogate can be obtained using the collected input-output data $\cD$ and requires no state information.
This paves the way forward to Koopman-based control in practical applications, where the full state is typically not accessible.

Note that the output-feedback controller established in Theorem~\ref{thm:Koopman-control-output-feedback} is based on the SOS-based \emph{state}-feedback design proposed in~\citet{strasser:berberich:allgower:2025}.
However, the Koopman-based bilinear surrogate~\eqref{eq:bilinear-surrogate-perturbed} derived for the input-output representation~\eqref{eq:IO-representation-nonlinear} is applicable more generally, i.e., it can be combined with \emph{any} robust state-feedback controller design for (Koopman-based) bilinear surrogate models that provides closed-loop guarantees for the underlying nonlinear state-space system.
In particular, the key contribution of Theorem~\ref{thm:Koopman-control-output-feedback} lies in its explicit connection between the input-output behavior of the nonlinear system~\eqref{eq:dynamics-nonlinear} and the extended-state representation~\eqref{eq:IO-representation-nonlinear}, as formalized in Theorem~\ref{thm:Koopman-bilinearization-perturbed}.
This connection enables an \emph{output}-feedback controller design.
Since this relationship is already established in Theorem~\ref{thm:Koopman-bilinearization-perturbed}, the resulting output-feedback construction directly extends to alternative state-feedback designs based on bilinear surrogate models with (proportionally) bounded residuals.

The following corollary relates the established exponential stability of the \emph{extended} state $\xi\in\bbR^{L(m+n)}$ to the original state $x\in\bbR^n$.
\begin{corollary}[Exponential convergence of original state]\label{cor:Koopman-control-output-feedback-convergence}
    Suppose Assumptions~\ref{ass:smoothness},~\ref{ass:lower-bound-Jacobian-observability-map}, and~\ref{ass:Koopman-bilinearization-perturbed} hold.
    Let $\alpha\in\bbN$, $\beta\in\bbN_0$ with $\alpha\geq \beta$, $P=P^\top\succ 0$ of size $N\times N$, $L_\mathrm{n}\in\bbR[z,2\alpha-1]^{m\times N}$, $\tau_1,\tau_2\in\mathrm{SOS}_+[z,2\beta]$, $u_\mathrm{d}\in\mathrm{SOS}_+[z,2\alpha]$, and $\rho > 0$ such that~\eqref{eq:Koopman-control-output-feedback}
    with $z\in\bbR^N$ holds.
    Then, the output-feedback control law~\eqref{eq:control-law-output-feedback} based on the extended state $\xi$ in~\eqref{eq:extended-state} achieves \emph{exponential convergence} of the state $x$ of the nonlinear system~\eqref{eq:dynamics-nonlinear} to the origin for all initial conditions $\hat{\xi}\in\Omega(c^*)$, where $\Omega(c)$ and $c^*$ are as defined in~\eqref{eq:Omega-c-star}.
\end{corollary}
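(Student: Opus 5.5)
The argument chains Theorem~\ref{thm:Koopman-control-output-feedback} with the reconstruction map from Proposition~\ref{prop:IO-representation-nonlinear}. Theorem~\ref{thm:Koopman-control-output-feedback} gives constants $C\geq 1$ and $\lambda\in(0,1)$ such that, for every $\hat{\xi}\in\Omega(c^*)$, the closed-loop extended state satisfies $\|\xi_k\|\leq C\lambda^k\|\hat{\xi}\|$ for all $k\geq 0$. Along the way we also need that $\xi_k\in\Omega(c^*)\subseteq\bbE$ for all $k$, so that the data-based representation and the control law remain valid. I would take these constants from the Lyapunov function $V(\xi)=\Psi(\xi)^\top P^{-1}\Psi(\xi)$, using the sandwich $\|\xi\|\leq\|\Psi(\xi)\|\leq L_\Psi\|\xi\|$ on $\bbE$. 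Invariance of the sublevel set follows from the decrease of $V$.

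\textbf{Step 1: the extended state tracks the true system.} By Proposition~\ref{prop:IO-representation-nonlinear}, for the closed loop $u_k=\mu(\xi_k)$, the extended state $\xi_k$ built from measured data coincides with the stacked past inputs and outputs of the true system~\eqref{eq:dynamics-nonlinear}. Moreover, $\mu(\xi_k)\in\bbU$ by the definition of $c^*$.

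\textbf{Step 2: recover $x$ from $\xi$.} Since $\xi_k$ consists of the inputs and outputs on the window $k-L,\dots,k-1$, we have $x_{k-L}=\cO_L^{-1}(\useq_{k-L:k-1},\yseq_{k-L:k-1})$. At the origin, Assumption~\ref{ass:smoothness} gives $f(0,0)=0$ and $h(0,0)=0$, so $\cO_L(0,0_{mL})=0$ and hence $\cO_L^{-1}(0_{mL},0)=0$. The lower bound in Assumption~\ref{ass:lower-bound-Jacobian-observability-map} applies only at a \emph{fixed} input sequence. I would therefore split
\begin{equation*}
\|x_{k-L}-0\|\leq \tfrac{1}{\alpha}\|\cO_L(x_{k-L},\useq)-\cO_L(0,\useq)\|\leq\tfrac{1}{\alpha}\big(\|\yseq\|+\|\cO_L(0,\useq)\|\big),
\end{equation*}
where $\useq=\useq_{k-L:k-1}$ and $\yseq=\yseq_{k-L:k-1}$. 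The term $\|\cO_L(0,\useq)\|$ is bounded by $L_u\|\useq\|$ for a Lipschitz constant $L_u$ of $\cO_L(0,\cdot)$ on the compact convex set $\bbU^L$. This constant exists because $\cO_L$ is $C^1$, and the mean value inequality applies. Combining the two terms gives $\|x_{k-L}\|\leq c_x\|\xi_k\|$ with $c_x=\alpha^{-1}\sqrt{2}\max\{1,L_u\}$. This avoids needing joint regularity of $\cO_L^{-1}$ in $(\useq,\yseq)$, which is the delicate point. The paper only asserts the Lipschitz bound~\eqref{eq:Lipschitz-inverse} in $y$ for fixed $u$, and the pointwise-inverse argument must stay inside $\bbX$. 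I expect this step to be the main obstacle. It requires that the true trajectory satisfy $x_j\in\bbX$ for all $j$, which I would need to argue or else assume, perhaps as an invariance consequence of $\xi_k\in\bbE$, or simply as part of the standing setting.

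\textbf{Step 3: conclude.} Step 2 gives $\|x_{k-L}\|\leq c_xC\lambda^k\|\hat{\xi}\|$ for all $k\geq0$. Reindexing with $j=k-L$ yields $\|x_j\|\leq c_xC\lambda^{L}\lambda^{j}\|\hat{\xi}\|$ for all $j\geq -L$, which is exponential convergence of $x$ to the origin with rate $\lambda$. Alternatively, $x_k=f^L(x_{k-L},\useq_{k-L:k-1})$ together with Lipschitz continuity of $f^L$ on compacts would give the same bound directly at time $k$. The constant involves $\hat{\xi}$, not $x_0$, which matches the statement, since only convergence, and not stability in $x$, is claimed.
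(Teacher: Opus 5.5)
Your proposal is correct and follows essentially the same route as the paper's proof. You take the exponential bound on $\xi_k$ from Theorem~\ref{thm:Koopman-control-output-feedback}, apply the lower bound of Assumption~\ref{ass:lower-bound-Jacobian-observability-map} at the common input sequence $\useq_{k-L:k-1}$ with $x_2=0$, split by the triangle inequality, and bound $\|\cO_L(0,\useq_{k-L:k-1})\|\leq L_u\|\useq_{k-L:k-1}\|$ via $\cO_L(0,\mathbf{0})=0$ to get $\|x_{k-L}\|\leq c_x\|\xi_k\|$; the differences are cosmetic (your constant $\sqrt{2}\max\{1,L_u\}/\alpha$ versus the paper's $(1+L_u)/\alpha$, and you keep the overshoot constant $C$ that the paper drops), and the conditions you flag explicitly --- forward invariance of $\Omega(c^*)$ and $x_j\in\bbX$ along the closed loop --- are left tacit in the paper's argument as well.
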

\begin{proof}
    Theorem~\ref{thm:Koopman-control-output-feedback} already establishes exponential stability of the extended-state representation~\eqref{eq:IO-representation-nonlinear} using Assumption~\ref{ass:Koopman-bilinearization-perturbed}.
    More precisely, exponential stability of the extended-state system ensures the existence of constants $c>0$ and $\lambda\in(0,1)$ such that $\|\xi_{k}\| \leq c \lambda^{k} \|\xi_0\|$ for all $k\geq 0$.
    Using Proposition~\ref{prop:IO-representation-nonlinear} and uniform observability based on Assumptions~\ref{ass:smoothness} and~\ref{ass:lower-bound-Jacobian-observability-map}, there exists the well-defined maps $\cO_L$ and $\cO_L^{-1}$ with $y_{k-L,k-1} = \cO(x_{k-L},\useq_{k-L,k-1})$ and $x_{k-L} = \cO_L^{-1}(\xi_k)$, where $\cO$ satisfies the bi-Lipschitz condition in~\eqref{eq:bi-Lipschitz-cO}.
    We choose $\useq = \useq_{k-L:k-1}$ and set $x_1=x_{k-L}$ and $x_2=0$ and obtain
    \begin{align}
        \alpha \|x_{k-L}\|
        &\leq \|\cO_L(x_{k-L},\useq_{k-L:k-1}) - \cO_L(0,\useq_{k-L:k-1})\|
        = \|y_{k-L:k-1} - \cO_L(0,\useq_{k-L:k-1})\|
        \\
        &\leq \|y_{k-L:k-1}\| + \cO_L(0,\useq_{k-L:k-1})\|.
    \end{align}
    Since $y_{k-L:k-1}$ is part of $\xi_k$, we know $\|y_{k-L:k-1}\|\leq\|\xi_k\|$.
    Using Assumption~\ref{ass:smoothness}, $\cO$ is $C^1$ in $\useq\in\bbU^L$ and $\bbU$ is compact, i.e., there exists $L_u>0$ such that 
    \begin{equation}
        \| \cO_L(0,\useq_{k-L:k-1}) - \cO_L(0,\mathbf{0}) \| \leq L_u \|\useq_{k-L,k-1}\|.
    \end{equation}%
    Note that $\cO_L(0,\mathbf{0})=0$ and $\useq_{k-L:k-1}$ is part of $\xi_k$.
    Then, we deduce  
    \begin{equation}
        \| \cO_L(0,\useq_{k-L:k-1}) \| 
        \leq L_u \|\useq_{k-L,k-1}\|
        \leq L_u \|\xi_k\|.
    \end{equation}%
    Hence, 
    \begin{equation}
        \|x_{k-L}\|
        \leq \frac{1+L_u}{\alpha} \|\xi_k\|
        \leq \frac{1+L_u}{\alpha} \lambda^k \|\xi_0\|.
    \end{equation}%
    Thus, we conclude convergence of the unknown state $x$ to the origin, i.e., $x_k\to 0$ for $k\to \infty$.    
\end{proof}
Corollary~\ref{cor:Koopman-control-output-feedback-convergence} links the closed-loop properties for the extended-state representation~\eqref{eq:IO-representation-nonlinear} to the original nonlinear state-space system~\eqref{eq:dynamics-nonlinear}.
In particular, by using the proposed output-feedback controller, we ensure exponential convergence of the state $x$ to the origin.

\begin{remark}\label{rem:SOS-combined-residuals}
    Given the possibly high-dimensional lifting dimension $N$, the SOS program~\eqref{eq:Koopman-control-output-feedback} may be computationally challenging to solve.
    While we leave a structured analysis of possible model order reduction techniques for the proposed output-feedback controller design to future work, a direct dimensionality reduction follows from combining the uncertainty characterizations of both residual terms $r_\Psi$ and $r_\Delta$.
    More precisely, solving
    \begin{equation}\label{eq:Koopman-control-output-feedback-combined-residuals}
        \begin{bmatrix}
            q_{11}(z)
            & 0 
            & 0 
            & q_{15}(z)
            \\
            0 
            & \frac{\tau_1(z)}{2 c_{\Psi,\xi}^2 + c_{\Delta,\xi}^2} I_N
            & 0 
            & u_\mathrm{d}(z) P
            \\
            0 
            & 0 
            & \frac{\tau_1(z)}{2 c_{\Psi,u}^2 + c_{\Delta,u}^2} I_m
            & L_\mathrm{n}(z)
            \\
            q_{15}(z)^\top
            & u_\mathrm{d}(z) P
            & L_\mathrm{n}(z)^\top
            & u_\mathrm{d}(z) (P - \rho I_N)
        \end{bmatrix}
        \in\mathrm{SOS}[z,2\alpha]^{3N+n_u}
    \end{equation}%
    instead of~\eqref{eq:Koopman-control-output-feedback} in order to obtain the control law $\mu$ in~\eqref{eq:control-law-output-feedback} reduces the dimension of the SOS program by introducing additional conservatism through the combination of the two residual error bounds.
\end{remark}

\section{Numerical example}\label{sec:numerical-examples}
In this section, we illustrate our theoretical findings in numerical simulations.
All our simulations are conducted on an i7 notebook using Yalmip~\citep{lofberg:2004} with the semi-definite programming solver MOSEK~\citep{mosek:2026} in MATLAB R2026a.

We consider the continuous-time nonlinear dynamical system
\begin{subequations}
    \begin{align}
        \dot{x}_1(t) &= -2 x_1(t), \\
        \dot{x}_2(t) &= -x_1(t)^2 + x_2(t) + u, \\
        y(t) &= x_1(t) + x_2(t),
    \end{align}
\end{subequations}
which is discretized with a 4th order Runge-Kutta method~\citep{dormand:prince:1980} for the sampling time $T_s = 0.1$.
Let $\bbX = [-1.5,1.5]^2$, $\bbU = [-3,3]$, and $\bbY=[-3,3]$.
We collect $d$ data trajectories of length $L$, where we uniformly sample the initial state from $\bbX$ and evaluate the dynamics to obtain the corresponding input-output measurements.
This allows us to construct the extended-state data as in~\eqref{eq:data-extended-state}.

In the following, we illustrate the effectiveness of the proposed Koopman-based surrogate model for prediction (Section~\ref{sec:numerical-examples:prediction}) as well as for an output-feedback controller design (Section~\ref{sec:numerical-examples:control}).

\subsection{Open-loop prediction}\label{sec:numerical-examples:prediction}
First, we investigate the prediction capability of the proposed Koopman-based bilinear surrogate for the input-output behavior of the underlying nonlinear system.
To this end, we employ a polynomial lifting function $\Psi(\xi)$, which contains all monomials up to degree $n_d$. 
Then, we choose $d=100$ and vary the delay length $L$ as well as the degree $n_d$ to build the surrogate in~\eqref{eq:bilinear-surrogate-perturbed} purely based on input-output data.
We consider the nominal bilinear surrogate 
\begin{equation}\label{eq:Koopman-surrogate-nominal}
    \check{\Psi}_{k+1} = A \check{\Psi}_k + B_0 u_k + \tB (u_k \otimes \check{\Psi}_k), 
    \qquad
    \check{\Psi}_0 = \Psi(\xi_0)
\end{equation}%
and compare the open-loop prediction error 
\begin{equation}
    \frac{1}{N} \| \Psi(\xi_k) - \check{\Psi}_k \|
\end{equation}%
for random inputs $u_k$ uniformly sampled from $\bbU$.
The resulting open-loop prediction error, averaged over 50 runs, is depicted in Fig.~\ref{fig:numerics-prediction-KoopmanIO}.
Here, we illustrate different combinations of $L$ and $n_d$.
\begin{figure}[t]
    \centering
    \setbox1=\hbox{\begin{tikzpicture}[baseline]%
        \draw[black, thick] (0,.6ex)--++(1.25em,0);
    \end{tikzpicture}}%
    \setbox2=\hbox{\begin{tikzpicture}[baseline]%
        \draw[black, thick, dashed] (0,.6ex)--++(1.25em,0);
    \end{tikzpicture}}%
    \setbox3=\hbox{\begin{tikzpicture}[baseline]%
        \draw[black, thick, dotted] (0,.6ex)--++(1.25em,0);
    \end{tikzpicture}}%
    \setbox4=\hbox{\begin{tikzpicture}[baseline]%
        \draw[black, thick, dashdotted] (0,.6ex)--++(1.25em,0);
    \end{tikzpicture}}%
    \subfloat[Proposed bilinear surrogate~\eqref{eq:Koopman-bilinearization-perturbed}.]{\label{fig:numerics-prediction-KoopmanIO}
        \input{Fig/prediction-benchmark-KoopmanIO}
    }
    \hfill
    \subfloat[Linear EDMDc surrogate~\eqref{eq:linear-EDMDc-surrogate}.]{\label{fig:numerics-prediction-EDMDc}
        \input{Fig/prediction-benchmark-EDMDc}
    }
    \caption{Open-loop prediction error of the (nominal) Koopman-based bilinear surrogate~\eqref{eq:Koopman-surrogate-nominal} compared to a \emph{linear} EDMDc surrogate for $n_d=1$, $L=100$~(\usebox1), $n_d=2$, $L=3$~(\usebox2), $n_d=2$, $L=10$~(\usebox3), and $n_d=3$, $L=6$~(\usebox4). The resulting errors are averaged over 50 independent runs.}
    \label{fig:numerics-prediction}
\end{figure}
We emphasize that choosing $\Psi(\xi)=\xi$, i.e., $n_d=1$, yields a satisfactory prediction error for a sufficiently large delay length $L$.
Further, as we see for $n_d=2$, increasing $L$ improves the prediction capabilities of the proposed bilinear surrogate.
Here, the initial prediction error at $t=0$ indicates the approximation quality of the employed delay embedding of depth $L$ together with the $N$-dimensional nonlinear lifting function $\Psi$.

We compare the prediction error to a \emph{linear} EDMDc-based Koopman surrogate as in~\eqref{eq:EDMDc-linear-dynamics}, i.e., 
\begin{equation}\label{eq:linear-EDMDc-surrogate}
    \check{\Psi}_{k+1}^\mathrm{EDMDc} = A^\mathrm{EDMDc} \check{\Psi}_k^\mathrm{EDMDc} + B^\mathrm{EDMDc} u_k, 
    \qquad
    \check{\Psi}_0^\mathrm{EDMDc} = \Psi(\xi_0)
\end{equation}%
where we use the same polynomial lifting function $\Psi(\xi)$ and the same collected data. 
Notably, the prediction error of EDMDc is worse than our proposed bilinear surrogate for all combinations of $n_d$ and $L$ studied in this paper.
More precisely, the prediction error of EDMDc may quickly explode and, therefore, EDMDc is sensitive to the employed delay length $L$ and chosen degree $n_d$ of the lifting function.
Thus, our proposed bilinear Koopman surrogate~\eqref{eq:Koopman-bilinearization-perturbed} offers a more robust prediction w.r.t. the employed lifting and input-output characterization of the underlying nonlinear system.

\subsection{Closed-loop simulation}\label{sec:numerical-examples:control}
Now, we apply the proposed output-feedback controller design in Theorem~\ref{thm:Koopman-control-output-feedback} to show that the output of the nonlinear system converges to the origin.
To this end, we choose $d=100$ with $L=1$ and a polynomial lifting function $\Psi(\xi)$ including all monomials up to degree $n_d=2$.
For the bilinearization error in Assumption~\ref{ass:Koopman-bilinearization-perturbed}, we assume the residual error bound~\eqref{eq:Koopman-bilinearization-proportional-error-bound} to hold with constants $c_{\Psi,\xi}=c_{\Psi_u}=1\mathrm{e}-4$.
Then, we compute the constants $c_{\Delta,\xi}$, $c_{\Delta,u}$ for the error bound~\eqref{eq:bound-r-Delta} on the residual $r_\Delta$ according to~\eqref{eq:bound-r-Delta-constants}.

To solve the SOS program~\eqref{eq:Koopman-control-output-feedback}, we choose $\alpha=\beta=1$ and 
\begin{equation}
    u_\mathrm{d}(\xi) 
    = \begin{bmatrix}
        1 \\ \xi_1 \\ \xi_2 \\ \xi_1^2 \\ \xi_1\xi_2 \\ \xi_2^2
    \end{bmatrix}^\top
    \begin{bmatrix}
        1 & 0 & 0 & 0 & 0 & 0 \\
        0 & 1 & 0.5 & 0.5 & 0.5 & 0.5 \\
        0 & 0.5 & 1 & 0.5 & 0.5 & 0.5 \\
        0 & 0.5 & 0.5 & 1 & 0.5 & 0.5 \\
        0 & 0.5 & 0.5 & 0.5 & 1 & 0.5 \\
        0 & 0.5 & 0.5 & 0.5 & 0.5 & 1
    \end{bmatrix}
    \begin{bmatrix}
        1 \\ \xi_1 \\ \xi_2 \\ \xi_1^2 \\ \xi_1\xi_2 \\ \xi_2^2
    \end{bmatrix}
    \in\mathrm{SOS}[\Psi(\xi),2].
\end{equation}%
Then, the output-feedback control law is computed as in~\eqref{eq:control-law-output-feedback}.
To demonstrate its effectiveness, we sample 100 initial conditions $x_{-L}$ uniformly from $\bbX$ and simulate the nonlinear dynamics under a random input sequence $\useq_{-L,-1}$, drawn uniformly from $\bbU^L$, to initialize the extended state as in~\eqref{eq:extended-state-initialization}.
The resulting input-output measurements are used to construct the extended state $\xi_0$, after which the control law~\eqref{eq:control-law-output-feedback} is applied.
Fig.~\ref{fig:numerics-control} shows the closed-loop output trajectories $y$ for all 100 initial conditions, each of which converges to the origin.
\begin{figure}[t]
    \centering
    \subfloat[Proposed output-feedback controller~\eqref{eq:control-law-output-feedback}.]{\label{fig:numerics-control-KoopmanIO}
        \input{Fig/control-benchmark-y-KoopmanIO}
    }
    \hfill
    \subfloat[LQR based on EDMDc surrogate~\eqref{eq:linear-EDMDc-surrogate}.]{\label{fig:numerics-control-EDMDc}
        \input{Fig/control-benchmark-y-EDMDc}
    }
    \caption{Closed-loop simulation of the underlying nonlinear system with the proposed output feedback controller $\mu$ compared to an EDMDc-based LQR controller $\mu_\mathrm{LQR}$.}
    \label{fig:numerics-control}
\end{figure}

We compare the resulting closed-loop behavior against one of the most widely used Koopman-based controller design approaches in the literature, namely a linear-quadratic regulator (LQR) designed on the linear EDMDc surrogate~\eqref{eq:linear-EDMDc-surrogate}~\citep{brunton:brunton:proctor:kutz:2016}.
The LQR control law takes the form $\mu_\mathrm{LQR}(\xi) = -K \Psi(\xi)$, where $K_\mathrm{LQR}$ is obtained by solving the corresponding Riccati equation with $Q = R = I$.
Unlike our proposed output-feedback controller~\eqref{eq:control-law-output-feedback}, the LQR controller $\mu_\mathrm{LQR}$ offers no closed-loop stability guarantees and fails to reliably stabilize the system.
By exploiting the bilinear Koopman surrogate of the input-output representation within an SOS framework, we obtain rigorous closed-loop guarantees (Corollary~\ref{cor:Koopman-control-output-feedback-convergence}) and successfully stabilize all initial conditions.

In the simulation above, we choose a delay length of $L = 1$ for the extended state $\xi$ and design the output-feedback controller $\mu$ accordingly.
As observed in the prediction results of Section~\ref{sec:numerical-examples:prediction}, a larger delay length $L$ and richer lifting function $\Psi(\xi)$ would generally be beneficial.
However, the SOS-based controller design is computationally demanding with a computational complexity of $\cO((N^{2\alpha+1})^6)$ and therefore limited in the dimension of the extended state and the nonlinear lifting $\Psi$ that can be practically handled.
An important direction for future work is thus the investigation of model order reduction schemes for the bilinear Koopman surrogate, which would render the controller design feasible for more appropriate choices of $L$ and $\Psi(\xi)$.
Beyond SOS, combining the derived bilinear Koopman surrogate with other controller design techniques, such as MPC, offers another compelling direction to address this limitation.

\section{Conclusion}\label{sec:conclusion}
In this paper, we proposed a data-driven output-feedback controller design method for nonlinear systems that provides provable closed-loop guarantees while relying solely on measured input-output data. 
By combining Koopman operator theory with an extended state representation constructed from input-output trajectories, we derived a bilinear surrogate model directly from data, on which we applied robust state-feedback methods. 
Exploiting the observability of the underlying nonlinear system, we established exponential stability of the extended state and, consequently, exponential convergence of the original system state to the origin, with numerical simulations confirming the theoretical findings. 
To the best of our knowledge, this is the first Koopman-based controller design framework that operates exclusively on input-output data with rigorous closed-loop guarantees, addressing a fundamental gap in the data-driven control literature.
Important directions for future work include extending the framework to handle noisy measurements with finite-sample error quantification or combining the design with model order reduction techniques to enhance practical applicability.

\section*{Funding}
This work was supported by the Deutsche Forschungsgemeinschaft (DFG, German Research Foundation) [AL 316/15-1 -- 468094890 to F.A., 579821331 to J.B.].

\section*{Acknowledgments}
R. Strässer thanks the Graduate Academy of the SC SimTech, Germany for its support.

\bibliographystyle{abbrvnat}
\bibliography{literature}

\appendix
\section{Input-output representation of linear system}
\phantomsection\label{app:IO-representation-linear}
Since the state is not accessible in our setting and only input-output data can be gathered, we need an alternative system characterization of the underlying system. 
To set the stage, we first recall relevant results for linear systems, which serve as a foundation for the later derived input-output representations for bilinear and general nonlinear systems.
More precisely, we leverage ideas from subspace identification to define a non-minimal representation of linear state-space systems using an extended state.
In this section, we consider the discrete-time linear system
\begin{subequations}\label{eq:dynamics-linear}
    \begin{align}
        z_{k+1} &= A z_k + B_0 u_k, \qquad z_0 \in\bbR^N, \\
        y_k &= C z_k + D u_k
    \end{align}%
\end{subequations}%
with $z\in\bbR^N$, $u\in\bbR^m$, and $y\in\bbR^p$, where the state $z_k$ is not available.
Since the minimal internal state $z_k$ is unknown, the core idea is to use past input-output measurements to infer knowledge about the initial condition.
To this end, we characterize the sequence of the past $L>0$ outputs via 
\begin{equation}\label{eq:IO-representation-linear-yseq}
    \yseq_{k-L:k-1}
    = \underbrace{
        \begin{bmatrix}
            C \\ CA \\ \vdots \\ CA^{L-1}
        \end{bmatrix}
    }_{\hat{\cH}_L^{(z_0)}} 
    z_{k-L}
    + \underbrace{
        \begin{bmatrix}
            D & 0 & \cdots & 0 \\
            CB_0 & D & \cdots & 0 \\
            \vdots & \vdots & \ddots & \vdots \\
            CA^{L-2} B_0 & CA^{L-1}B_0 & \cdots & D
        \end{bmatrix}
    }_{\hat{\cH}_L^{(u)}}
    \useq_{k-L:k-1}.
\end{equation}%
\begin{definition}[Lag of linear systems]\label{def:lag-linear}
    The lag $\nu$ of the linear system~\eqref{eq:dynamics-linear} is defined as the smallest integer $L$ such that $ \hat{\cH}_L^{(z_0)} $ has full column rank.
\end{definition}
The lag corresponds to the observability index and determines whether the initial condition $z_{k-L}$ can be uniquely determined from past input-output sequences.
Further, we construct an extended state vector from stacked past inputs and outputs over a finite horizon $L$ as in~\eqref{eq:extended-state}.
Now we can formulate an equivalent representation of the input-output behavior of length $L$ for system~\eqref{eq:dynamics-linear}.
\begin{proposition}[Input-output representation of linear systems]\label{prop:IO-representation-linear}
    Let $L\geq \nu$.
    Then, any trajectory $\{u_k,y_k\}_{k=-L}^{d}$ of~\eqref{eq:dynamics-linear} can be explained by the extended-state system 
    \begin{subequations}\label{eq:IO-representation-linear}
        \begin{align}
            \xi_{k+1} &= \tA \xi_k + \tB u_k, \\
            y_k &= \tC \xi_k + \tD u_k
        \end{align}%
    \end{subequations}%
    with an extended state~\eqref{eq:extended-state} initialized as in~\eqref{eq:extended-state-initialization}, where the matrices $\tA$, $\tB$, $\tC$, and $\tD$ are defined by 
    \begin{subequations}\label{eq:extended-state-matrices-linear}
        \begin{align}
            \label{eq:extended-state-matrices-linear-A}
            \tA &= \left[\begin{array}{cccc|cccc}
                0 & I & \cdots & 0 & 0 & \cdots & 0 & 0 \\
                \vdots & \vdots & \ddots & \vdots & \vdots & \ddots & \vdots & \vdots \\
                0 & 0 & \cdots & I & 0 & \cdots & 0 & 0 \\
                0 & 0 & \cdots & 0 & 0 & \cdots & 0 & 0 \\
                \hline
                0 & 0 & \cdots & 0 & 0 & I & \cdots & 0 \\
                \vdots & \vdots & \ddots & \vdots & \vdots & \ddots & \vdots & \vdots \\
                0 & 0 & \cdots & 0 & 0 & 0 & \cdots & I \\
                CA^{L-1} B_0 & CA^{L-2} B_0 & \cdots & CB_0 & 0 & 0 & \cdots & 0
            \end{array}\right]
            + \left[\begin{array}{c}
                0 \\
                \vdots \\
                0 \\
                0 \\
                \hline
                0 \\
                \vdots \\
                0 \\
                CA^L (\hat{\cH}_L^{(z_0)})^\dagger \begin{bmatrix} -\hat{\cH}_L^{(u)} & I_{pL} \end{bmatrix}
            \end{array}\right],
            \\
            \label{eq:extended-state-matrices-linear-B-C-D}
            \tB &= \left[\begin{array}{cccc|cccc}
                0 &
                \cdots &
                0 &
                I_m &
                0 &
                \cdots &
                0 &
                D
            \end{array}\right]^\top,
            \quad 
            \tC = \left[\begin{array}{cccc|cccc}
                0 & 0 & \cdots & 0 & 0 & 0 & \cdots & I
            \end{array}\right]\tA,
            \quad 
            \tD = D.
        \end{align}%
    \end{subequations}%
\end{proposition}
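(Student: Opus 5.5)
The plan is to mirror the proof of Proposition~\ref{prop:IO-representation-nonlinear}, with the inverse observability map $\cO_L^{-1}$ replaced by the pseudoinverse $(\hat{\cH}_L^{(z_0)})^\dagger$. The argument is an induction over $k$. Given that $\xi_k$ is formed as in~\eqref{eq:extended-state} from the true trajectory (true for $k=0$ by the initialization~\eqref{eq:extended-state-initialization}), I will show that $\tA\xi_k+\tB u_k$ equals the extended state built from $\useq_{k-L+1:k}$ and $\yseq_{k-L+1:k}$, and that $\tC\xi_k+\tD u_k=y_k$.

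First, the block rows of $\tA$ and $\tB$ other than the last output block are pure shifts: each block identity moves $u_{k-L+i}\mapsto u_{k-L+i}$ one slot up, and the last input block receives $u_k$ through the $I_m$ in $\tB$. The same holds for all output blocks except the last, which receive $y_{k-L+1},\dots,y_{k-1}$ from $\xi_k$. This part is immediate bookkeeping.

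Second, I will reconstruct $z_{k-L}$. Since $L\geq\nu$, the matrix $\hat{\cH}_L^{(z_0)}$ has full column rank: $\hat{\cH}_\nu^{(z_0)}$ has full column rank by Definition~\ref{def:lag-linear}, and appending the rows $CA^{\nu},\dots,CA^{L-1}$ cannot decrease the rank. Hence $(\hat{\cH}_L^{(z_0)})^\dagger\hat{\cH}_L^{(z_0)}=I_N$. Rearranging~\eqref{eq:IO-representation-linear-yseq} then gives
\begin{equation*}
    z_{k-L}=(\hat{\cH}_L^{(z_0)})^\dagger\big(\yseq_{k-L:k-1}-\hat{\cH}_L^{(u)}\useq_{k-L:k-1}\big)
    =(\hat{\cH}_L^{(z_0)})^\dagger\begin{bmatrix}-\hat{\cH}_L^{(u)} & I_{pL}\end{bmatrix}\xi_k .
\end{equation*}

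Third, I will propagate this estimate $L$ steps by iterating~\eqref{eq:dynamics-linear}:
\begin{equation*}
    z_k=A^Lz_{k-L}+\sum_{j=0}^{L-1}A^{L-1-j}B_0u_{k-L+j},
\end{equation*}
so that
\begin{equation*}
    y_k=CA^Lz_{k-L}+\sum_{j=0}^{L-1}CA^{L-1-j}B_0u_{k-L+j}+Du_k .
\end{equation*}
Substituting the reconstruction of $z_{k-L}$ yields the last block row of $\tA$. The first summand is $CA^L(\hat{\cH}_L^{(z_0)})^\dagger[-\hat{\cH}_L^{(u)}\ I_{pL}]\xi_k$. The Markov-parameter row $[CA^{L-1}B_0\ \cdots\ CB_0\ 0\ \cdots\ 0]$ acts on the input part of $\xi_k$. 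The term $Du_k$ is the last block of $\tB$. Hence the last block of $\xi_{k+1}$ equals $y_k$, which closes the induction.

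Finally, the output equation follows because $y_k$ is the last block of $\xi_{k+1}=\tA\xi_k+\tB u_k$. Selecting that block gives $y_k=[0\ \cdots\ I]\tA\xi_k+Du_k=\tC\xi_k+\tD u_k$, since the selector applied to $\tB$ returns exactly $D$.

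The only real obstacle is index bookkeeping. The powers of $A$ must be matched with the ordering of $u_{k-L},\dots,u_{k-1}$ in $\xi_k$, which also means checking the lower-triangular Toeplitz structure of $\hat{\cH}_L^{(u)}$ (entry $(i,j)$ equal to $CA^{i-j-1}B_0$ for $i>j$). I would verify this first on small $L$, say $L=1,2$.
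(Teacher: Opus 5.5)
Your proposal is correct and is essentially the paper's proof: you recover $z_{k-L}$ from $\xi_k$ via the left inverse of the observability matrix (which has full column rank for $L\geq\nu$), propagate it $L$ steps to write $y_k$ as $CA^L z_{k-L}$ plus the Markov-parameter terms plus $Du_k$, and obtain the remaining rows from the shift structure. Your explicit induction over $k$ and your argument that appending rows cannot lower the rank make rigorous what the paper leaves implicit, and your Toeplitz indexing $CA^{i-j-1}B_0$ is the correct one (the paper's displayed last row of the input matrix has an index slip in its second entry).
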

\begin{proof}
    This result is well-known; see, e.g.,~\citet{koch:berberich:allgower:2022,ho:kalman:1966}.
    For completeness, we will include a short proof in the following.

    First, we rewrite~\eqref{eq:IO-representation-linear-yseq} as $
        \begin{bmatrix} -\hat{\cH}_L^{(u)} & I_{pL} \end{bmatrix}
        \xi_k
        = \hat{\cH}_L^{(z_0)}  z_{k-L}
    $.
    Due to the definition of the lag $\nu$, $\hat{\cH}_L^{(z_0)}$ has full column rank for any length $L\geq \nu$ and, thus, the left-inverse
    \begin{equation}
        (\hat{\cH}_L^{(z_0)})^\dagger 
        = (\hat{\cH}_L^{(z_0)})^\top \hat{\cH}_L^{(z_0)})^{-1} (\hat{\cH}_L^{(z_0)})^\top.
    \end{equation}%
    exists.
    Hence, $z_{k-L} = (\hat{\cH}_L^{(z_0)})^\dagger \begin{bmatrix} -\hat{\cH}_L^{(u)} & I_{pL} \end{bmatrix} \xi_k$.
    Further, by recursively evaluating~\eqref{eq:dynamics-linear}, we observe
    \begin{align}
        y_k 
        &= CA^L z_{k-L} + \begin{bmatrix}
            CA^{L-1} B_0 & CA^{L-2} B_0 & \cdots & CB_0
        \end{bmatrix}
        \begin{bmatrix}
            u_{k-L} \\ u_{k-L+1} \\ \vdots \\ u_{k-1}
        \end{bmatrix}
        + D u_k
        \\
        &= CA^L (\hat{\cH}_L^{(z_0)})^\dagger \begin{bmatrix} -\hat{\cH}_L^{(u)} & I_{pL} \end{bmatrix} \xi_k + \begin{bmatrix}
            CA^{L-1} B_0 & CA^{L-2} B_0 & \cdots & CB_0
        \end{bmatrix}
        \begin{bmatrix}
            u_{k-L} \\ u_{k-L+1} \\ \vdots \\ u_{k-1}
        \end{bmatrix}
        + D u_k.
    \end{align}%
    Thus, by additionally exploiting the shift-structure of past inputs and outputs in $\xi_k$, we have established~\eqref{eq:IO-representation-linear} with the matrices $\tA$, $\tB$, $\tC$, and $\tD$ in~\eqref{eq:extended-state-matrices-linear}.
\end{proof}
Both representations~\eqref{eq:dynamics-linear} and~\eqref{eq:IO-representation-linear} share the same input-output behavior.
Thus, the extended state allows for a constructive system representation from input-output data.

\end{document}